\UseRawInputEncoding
\documentclass[11pt,reqno]{amsart}
\allowdisplaybreaks[4]
\usepackage{graphicx} 
\usepackage{subfigure}
\usepackage{epsfig}
\usepackage{amssymb}
\usepackage{amsmath,xypic}
\usepackage{cite,color}
\usepackage{cite}
\usepackage{ytableau}

\numberwithin{equation}{section}
\newtheorem{theorem}{Theorem}[section]
\newtheorem{definition}[theorem]{Definition}
\newtheorem{proposition}[theorem]{Proposition}
\newtheorem{corollary}[theorem]{Corollary}

\newtheorem{lemma}[theorem]{Lemma}

\newcommand{\be}{\begin{equation}}
\newcommand{\ee}{\end{equation}}
\newcommand{\bea}{\begin{eqnarray}}
\newcommand{\eea}{\end{eqnarray}}
\newcommand{\ba}{\begin{array}}
	\newcommand{\ea}{\end{array}}
\newcommand{\bean}{\begin{eqnarray*}}
	\newcommand{\eean}{\end{eqnarray*}}

\newcommand{\superY}[1]{\begin{array}{c} \scalebox{1}{\begin{ytableau}#1 \end{ytableau} } \end{array}}

\def\lprod{\mathop{\prod{\mkern-29.5mu}{\mathbf\longleftarrow}}}
\def\rprod{\mathop{\prod{\mkern-28.0mu}{\mathbf\longrightarrow}}}

\begin{document}
\title{Tau functions and correlation functions of the bosonic universal character hierarchy }

\author{Denghui Li, Jinzhou Liu and  Zhaowen Yan$^*$}

\dedicatory {School of Mathematical Sciences, Inner Mongolia University, \\ Inner Mongolia Key Laboratory of Mathematical Modeling and Scientific Computing,  \\ Hohhot, Inner Mongolia 010021,  P. R. China }

\thanks{*Corresponding author. Email: yanzw@imu.edu.cn (Z. W. Yan)}

\begin{abstract}
This paper is concerned with the construction of the bosonic universal character (UC) hierarchy, whose tau functions are investigated within the framework of charged free bosons. The Lie algebra corresponding to the bosonic UC hierarchy is  $\mathfrak{\widehat{gl}}_{2\infty}$. It is shown that the tau functions of bosonic UC hierarchy can be represented based on a series of ordered exponential operators. Furthermore, we derive correlation functions of the bosonic UC hierarchy, which can be expressed as the product of UCs. It is worth noting that the correlation functions of the bosonic UC hierarchy are the inverse of the correlation functions of the generalized phase model.\\
\textbf{Keywords}: tau functions; bosonic UC hierarchy; correlation functions; charged free bosons; universal characters; generalized phase model\\
\textbf{ Mathematics Subject Classifications (2000)}: 17B80, 35Q55, 37K10
\end{abstract}
\maketitle
\tableofcontents

\section{Introduction}
Classical and quantum integrable systems \cite{quantumphysics,integrable_quantum,introduction_integrablesystem}, as core research areas in mathematical physics, play a vital role in nonlinear wave theory \cite{integrable_wave}, quantum field theory \cite{integrable_QFT}, and string theory \cite{integrable_string}. The work of the Kyoto School systematically reveals the profound connections between the highest weight representations of infinite-dimensional Lie algebras and  classical integrable systems \cite{Jimbo82,Jimbo2,Kac2013}, including the Kadomtsev-Petviashvili (KP), KP of type B (BKP) and modified KP (MKP) hierarchies. By means of vertex operators, Tsuda \cite{UC} constructed an infinite-dimensional integrable system denoted as the UC hierarchy, which is a generalization of the KP hierarchy. The Lie algebra corresponding to the UC hierarchy is $\mathfrak{gl}(\infty)\oplus \mathfrak{gl}(\infty)$. The phase model is a quantum integrable model with continuous $\mathbf{U(1)}$ phase symmetry, whose scalar product is the tau function of the KP hierarchy \cite{Wheeler}. Moreover, Wang and Li \cite{GPM2019} established a realization of the two-site generalized phase model for strongly correlated bosons and derived the A-model topological string partition function on $\mathbb{C}^{3}$.

Charged free bosons and charged free fermions provide the two most fundamental frameworks for both integrable systems and algebraic structures \cite{Jimbo2000,boson_integrable}. Through Friedan-Martinec-Shenker bosonization, Wang \cite{Walgrbra_boson} proved that the $\mathcal{W}_{1+\infty}$ algebra with central charge $c=-1$ is isomorphic to the tensor product of the simple $\mathcal{W}_{3}$ algebra with central charge $c=-2$ and a Heisenberg vertex algebra generated by bosons. A new integrable sysytem named bosonic KP hierarchy was proposed by charged free bosons, and the corresponding tau functions of this hierarchy were discussed \cite{boson_hierarchy}. Meanwhile, in view of  the boson-boson correspondence, a novel proof of the Borchardt identity was presented \cite{tau_function_boson}. Guo et al. \cite{mBKP_fermoinic} obtained explicit Darboux transformation operators for the modified BKP hierarchy by  fermionic approach. The tau functions and soliton solutions of the UC hierarchy have been developed with  respect to charged free fermions \cite{UC_fermion}. Cui et al. \cite{PM_fermion} provided a representation of the phase model from the perspective of fermions. The fermion representation of the generalized phase model also has been introduced, in which the scalar products of  the generalized phase model were identified as tau functions of the UC hierarchy \cite{GPM_fermion}.

The theory of symmetric functions is not only a core part of classical algebras but also a key tool in the investigation of integrable systems \cite{Mac,Macdonald_integrable,Toda_KP,symmetric_integrable}. As a basic class of symmetric functions, Schur functions serve as tau functions of the KP hierarchy \cite{Jimbo2000}. UCs \cite{universal_characters} are a generalization of Schur functions, which are precisely the tau functions of the UC hierarchy. The correlation functions of the phase model have been expressed as determinants \cite{correctionphase}, whereas its wave functions admit a representation in terms of Schur functions \cite{model-symmetricfunction}. Based on charged free fermions,  Li et al. \cite{correclation_GPM} showed that  the correlation functions of the generalized phase model can be written as the product of UCs. Taking charged free bosons and Schur functions into consideration, Jing et al. \cite{bosoncorreclation} recently derived the correlation functions of the bosonic KP  hierarchy. Inspired by the connection between the charged free bosons and symmetric functions, we will construct a new integrable system called the bosonic UC hierarchy and explore its tau functions and correlation functions.

The paper is built up as follows. Some fundamental information about charged free bosons, UCs and the Baker-Campbell-Hausdorff (BCH) formula are reviewed in Section 2. In section 3, we construct the bosonic UC hierarchy by charged free bosons and investigate corresponding tau functions. Moreover, the representation of the Lie algebra $\mathfrak{\widehat{gl}}_{2\infty}$ is also developed in this section.  Section 4 is devoted to considering the correlation functions of the bosonic UC hierarchy. The last Section offers a summary and discussion.

\section{Preliminaries}
Our discussion begins with an overview of essential concepts, including charged free bosons, UCs and the Baker-Campbell-Hausdorff (BCH) formula\cite{bosoncorreclation,tau_function_boson,universal_characters,UC,MKP_Boson,
BCH_formula}.
\subsection{Charged free bosons}
The charged free bosons are introduced by the bosonic generating functions as the formal sums
\begin{eqnarray}\label{boson_generating_function}
&&a(z)=\sum\limits_{n\in\mathbb{Z}}a_{n}z^{-n},\quad a^{\ast}(z)=\sum\limits_{n\in\mathbb{Z}}a^{\ast}_{n}z^{-n-1},\notag\\
&&a^{\prime}(z)=\sum\limits_{n\in\mathbb{Z}}a^{\prime}_{n}z^{-n},\quad a^{\prime\ast}(z)=\sum\limits_{n\in\mathbb{Z}}a^{\prime\ast}_{n}z^{-n-1},
\end{eqnarray}
here $a_{n},a^{\ast}_{n},a^{\prime}_{n}$ and $a^{\prime\ast}_{n}$ are charged free bosons satisfying the  commutation relations
\begin{eqnarray}\label{boson_commutation}
&&[a_{m},a_{n}]=[a^{\ast}_{m},a^{\ast}_{n}]=0,\ [a_{m},a^{\ast}_{n}]=\delta_{m+n,0},\notag\\
&&[a^{\prime}_{m},a^{\prime}_{n}]=[a^{\prime\ast}_{m},a^{\prime\ast}_{n}]=0,\ [a^{\prime}_{m},a^{\prime\ast}_{n}]=\delta_{m+n,0},\notag\\
&&[a_{m},a^{\prime}_{n}]=[a_{m},a^{\prime\ast}_{n}]=[a^{\prime}_{m},a_{n}]
=[a^{\prime}_{m},a^{\ast}_{n}]=0,
\end{eqnarray}
where $[A,B]=AB-BA$ and $\delta_{m+n,0} = \begin{cases}
1, &m+n=0, \\
0, & \text{otherwise}.
\end{cases}$

The vacuum vector $|\text{vac}\rangle$ and  dual vacuum vector$\langle \text{vac}|$   are characterized by the properties below
\begin{eqnarray}\label{boson annihilation relations}
&&a_{m}|\text{vac}\rangle=a^{\prime}_{m}|\text{vac}\rangle
=a^{\ast}_{m+1}|\text{vac}\rangle=a^{\prime\ast}_{m+1}|\text{vac}\rangle=0,\notag\\
&&\langle\text{vac}|a_{n}=\langle\text{vac}|a^{\prime}_{n}
=\langle\text{vac}|a^{\ast}_{n+1}=\langle\text{vac}|a^{\prime\ast}_{n+1}=0,
\end{eqnarray}
for all integers $m\geq0, n<0$.

Let $\mathcal{A}$ be the algebra generated by charged free bosons $\{a_{m},a^{\ast}_{n},a^{\prime}_{m},a^{\prime\ast}_{n}|m,n\in\mathbb{Z}\}$. By the commutation relations of bosons, we can express any element $a\in\mathcal{A}$ as a linear combination
\begin{eqnarray}
a^{\ast\alpha_{1}}_{m_{1}}\cdots a^{\ast\alpha_{r}}_{m_{r}}a^{\beta_{1}}_{n_{1}}\cdots a^{\beta_{s}}_{n_{s}}a^{\prime\ast\gamma_{1}}_{p_{1}}\cdots a^{\prime\ast\gamma_{i}}_{p_{i}}a^{\prime\eta_{1}}_{q_{1}}\cdots a^{\prime\eta_{j}}_{q_{j}}
\end{eqnarray}
for $m_{1}<\cdots<m_{r}$, $n_{1}<\cdots<n_{s}$, $p_{1}<\cdots<p_{i}$, $q_{1}<\cdots<q_{j}$ and $\alpha_{k},\beta_{l},\gamma_{b},\eta_{c}$ are positive integers.

Introduce the bosonic Fock space $\mathcal{M}$ and dual bosonic Fock space $\mathcal{M}^{\ast}$ as follows
\begin{eqnarray}
&&\mathcal{M}=\mathcal{A}\cdot|\text{vac}\rangle=\{a|\text{vac}
\rangle\ |\ a\in\mathcal{A}\},\notag\\
&&\mathcal{M}^{\ast}=\langle\text{vac}|\cdot\mathcal{A}=\{\langle\text{vac}
|a\ |\ a\in\mathcal{A}\}.
\end{eqnarray}
In terms of the annihilation relations of charged free bosons (\ref{boson annihilation relations}), the bases of bosonic Fock space $\mathcal{M}$ and dual bosonic Fock space $\mathcal{M}^{\ast}$ are
\begin{align}
\left\{a^{\prime n^{\prime}_{p}}_{i_{p}^{\prime}}
\cdots a^{\prime n^{\prime}_{1}}_{i_{1}^{\prime}}a^{\prime\ast m^{\prime}_{q}}_{j_{q}^{\prime}}
\cdots a^{\prime\ast m^{\prime}_{1}}_{j_{1}^{\prime}}a^{n_{l}}_{i_{l}}\cdots a^{n_{1}}_{i_{1}}
a^{\ast m_{k}}_{j_{k}}\cdots a^{\ast m_{1}}_{j_{1}}|\text{vac}\rangle\,\middle|\, 
    \begin{aligned}
        & i_{l} < \cdots < i_{1} < 0,\ j_{k} < \cdots < j_{1} \leq 0, \\
        & i^{\prime}_{p} < \cdots < i^{\prime}_{1} < 0,\ j^{\prime}_{q} < \cdots < j^{\prime}_{1} \leq 0.
    \end{aligned}\right\},
\end{align}
\begin{align}
\left\{\langle\text{vac}|a^{n_{l}}_{i_{l}}\cdots a^{n_{1}}_{i_{1}}
a^{\ast m_{k}}_{j_{k}}\cdots a^{\ast m_{1}}_{j_{1}}a^{\prime n^{\prime}_{p}}_{i_{p}^{\prime}}
\cdots a^{\prime n^{\prime}_{1}}_{i_{1}^{\prime}} a^{\prime\ast m^{\prime}_{q}}_{j_{q}^{\prime}}
\cdots a^{\prime\ast m^{\prime}_{1}}_{j_{1}^{\prime}}\,\middle|\, 
    \begin{aligned}
        & i_{l} > \cdots > i_{1} \geq 0,\ j_{k} > \cdots > j_{1} > 0, \\
        & i^{\prime}_{p} > \cdots > i^{\prime}_{1} \geq 0,\ j^{\prime}_{q} > \cdots > j^{\prime}_{1} > 0.
    \end{aligned}\right\},
\end{align}
here $n_i(i=1,\ldots,l)$, $m_{j}(j=1,\ldots,k)$, $n^{\prime}_{i^{\prime}}(i^{\prime}=1,\ldots,p)$ and $m^{\prime}_{j^{\prime}}(j^{\prime}=1,\ldots,q)$ are nonnegative integers.

We define the charge of the bosons and the vacuum vectors as
\begin{eqnarray}
\centering
\begin{tabular}{c c c c c}
\hline
boson & $a_{n}$ & $a_{n}^{*}$ & $a^{\prime}_{n}$ & $a_{n}^{\prime\ast}$ \\
\hline
charge  & $(1,0)$  & $(-1,0)$   & $(0,1)$  & $(0,-1)$  \\
\hline
\end{tabular}
\end{eqnarray}
\begin{eqnarray}
\text{charge of}\ |\text{vac}\rangle(\text{or}\langle\text{vac}|)=(0,0).
\end{eqnarray}
For example, the charge of $a^{\prime2}_{-3}a^{\prime}_{-2}a^{\prime3}_{-1}a^{\prime\ast}_{-2}a_{-4}a^{2}_{-1}
a^{\ast}_{-3}a^{\ast}_{-2}|\text{vac}\rangle$ is $(1,5)$.

Let $\mathcal{M}_{(c_{1},c_{2})}$ and $\mathcal{M}^{\ast}_{(c_{1},c_{2})}$ be respectively the vector subspaces of $\mathcal{M}$ and $\mathcal{M}^{\ast}$ with definite charge $(c_{1},c_{2})$. Then $\mathcal{M}$ and $\mathcal{M}^{\ast}$ decompose as  a direct sum of vector spaces
\begin{eqnarray}
&&\mathcal{M}=\bigoplus\limits_{c_{1},c_{2}\in\mathbb{Z}}\mathcal{M}_{(c_{1},c_{2})},\notag\\
&&\mathcal{M}^{\ast}=\bigoplus\limits_{c_{1},c_{2}\in\mathbb{Z}}\mathcal{M}^{\ast}
_{(c_{1},c_{2})}.
\end{eqnarray}
Specifically, the basis of $\mathcal{M}_{(0,0)}$ is
\begin{align}
&\left\{a_{-l^{\prime}}^{\prime n^{\prime}_{l^{\prime}}}
\cdots a^{\prime n^{\prime}_{1}}_{-1}a^{\prime\ast n^{\prime}_{-k^{\prime}}}_{-k^{\prime}}
\cdots a^{\prime\ast n^{\prime}_{0}}_{0}a_{-l}^{n_{l}}\cdots a_{-1}^{n_{1}}
a^{\ast n_{-k}}_{-k}\cdots a^{\ast n_{0}}_{0}|\text{vac}\rangle\,\middle|\, 
    \begin{aligned}
        & \sum\limits_{i=0}^{k}n_{-i}=\sum\limits_{j=1}^{l}n_{j}, \ \sum\limits_{i=0}^{k^{\prime}}n^{\prime}_{-i}=\sum\limits_{j=1}^{l^{\prime}}n^{\prime}_{j}.
    \end{aligned}\right\},
\end{align}
and the basis of $\mathcal{M}^{\ast}_{(0,0)}$ is
\begin{align}
&\left\{\langle\text{vac}|a^{p_{0}}_{0}\cdots a^{p_{-k}}_{k}
a^{\ast p_{1}}_{1}\cdots a^{\ast p_{l}}_{l}a^{\prime p^{\prime}_{0}}_{0}
\cdots a^{\prime p^{\prime}_{-k^{\prime}}}_{k^{\prime}} a^{\prime\ast p^{\prime}_{1}}_{1}
\cdots a^{\prime\ast p^{\prime}_{l^{\prime}}}_{l^{\prime}}\,\middle|\, 
    \begin{aligned}
        & \sum\limits_{i=0}^{k}p_{-i}=\sum\limits_{j=1}^{l}p_{j},\ \sum\limits_{i=0}^{k^{\prime}}p^{\prime}_{-i}=\sum\limits_{j=1}^{l^{\prime}}p^{\prime}_{j}.
         \end{aligned}\right\}.
\end{align}

Consider the vacuum expectation value $\langle\ , \ \rangle$ from $\mathcal{M}^{\ast}\times\mathcal{M}$ to $\mathbb{C}$ by
\begin{eqnarray}
(\langle\text{vac}|a,b|\text{vac}\rangle)\mapsto \langle\text{vac}|a\cdot b|\text{vac}\rangle=\langle ab\rangle,\quad a,b\in\mathcal{A}.
\end{eqnarray}
In particular,
\begin{eqnarray}
&&\langle\text{vac}|1|\text{vac}\rangle=1,\notag\\
&&\langle\text{vac}|a_{m}a^{\ast}_{n}|\text{vac}\rangle=\delta_{m+n,0}\theta(n\leq0),
\end{eqnarray}
where the notation $\theta(P)$ is the Boolean characteristic function of a general property $P$. If $P$ is true, $\theta(P)=1$, otherwise $\theta(P)=0$.

The normal ordering $:a_{m}a^{\ast}_{n}:$ of charged free bosons is denoted  as
\begin{eqnarray}
:a_{m}a^{\ast}_{n}: = \begin{cases}
a^{\ast}_{n}a_{m}, &n\leq0, \\
a_{m}a^{\ast}_{n}, &n>0.
\end{cases}
\end{eqnarray}
It is apparent that
\begin{eqnarray}
:a_{m}a^{\ast}_{n}:=a_{m}a^{\ast}_{n}-\langle\text{vac}|a_{m}a^{\ast}_{n}|\text{vac}\rangle.
\end{eqnarray}

For $\langle m|=\langle\text{vac}|a^{m_{0}}_{0}\cdots a^{m_{-k}}_{k}a^{\ast m_{1}}_{1}a^{\ast m_{l}}_{l}$ and $|n\rangle=a^{n_{l}}_{-l}\cdots a^{n_{1}}_{-1}a^{\ast n_{-k}}_{-k}\cdots a^{\ast n_{0}}_{0}|\text{vac}\rangle$, it is easy to show that
\begin{eqnarray}
&&\langle m|n\rangle=(-1)^{\sum\limits^{l}_{i=1}}\prod\limits^{l}_{j=-k}\delta_{m_{j},n_{j}}m_{j}!.
\end{eqnarray}
Furthermore, assuming
\begin{eqnarray}
&&\langle m;m^{\prime}|=\langle\text{vac}|a^{m_{0}}_{0}\cdots a^{m_{-k}}_{k}a^{\ast m_{1}}_{1}a^{\ast m_{l}}_{l}a^{\prime m^{\prime}_{0}}_{0}\cdots a^{\prime m^{\prime}_{-k^{\prime}}}_{k^{\prime}}a^{\prime\ast m^{\prime}_{1}}_{1}a^{\prime\ast m^{\prime}_{l^{\prime}}}_{l^{\prime}},\notag\\
&&|n^{\prime};n\rangle=a^{\prime n^{\prime}_{l^{\prime}}}_{-l^{\prime}}\cdots a^{\prime n^{\prime}_{1}}_{-1}a^{\prime \ast n^{\prime}_{-k^{\prime}}}_{-k^{\prime}}\cdots a^{\prime \ast n^{\prime
}_{0}}_{0}a^{n_{l}}_{-l}\cdots a^{n_{1}}_{-1}a^{\ast n_{-k}}_{-k}\cdots a^{\ast n_{0}}_{0}|\text{vac}\rangle,
\end{eqnarray}
 we can obtain
\begin{eqnarray}\label{double_product}
\langle m;m^{\prime}|n^{\prime};n\rangle&=&(-1)^{\sum\limits^{l}_{i=1}+\sum\limits^{l^{\prime}}_{j=1}}
\prod\limits^{l}_{p=-k}\delta_{m_{p},n_{p}}m_{p}!\prod\limits^{l^{\prime}}_{q=-k^{\prime}}
\delta_{m^{\prime}_{q},n^{\prime}_{q}}m^{\prime}_{q}!\notag\\
&=&\langle m|n\rangle\langle m^{\prime}|n^{\prime}\rangle,
\end{eqnarray}
here $\langle m^{\prime}|=\langle\text{vac}|a^{\prime m^{\prime}_{0}}_{0}\cdots a^{\prime m^{\prime}_{-k^{\prime}}}_{k^{\prime}}a^{\prime\ast m^{\prime}_{1}}_{1}a^{\prime\ast m^{\prime}_{l^{\prime}}}_{l^{\prime}}$ and $|n^{\prime}\rangle=a^{\prime n^{\prime}_{l^{\prime}}}_{-l^{\prime}}\cdots a^{\prime n^{\prime}_{1}}_{-1}a^{\prime \ast n^{\prime}_{-k^{\prime}}}_{-k^{\prime}}\cdots a^{\prime \ast n^{\prime}_{0}}_{0}|\text{vac}\rangle$.

\subsection{Universal characters and the Baker-Campbell-Hausdorff  formula}
A partition refers to a non-increasing sequence of nonnegative integers where only finitely many entries are nonzero, which is denoted by $\lambda=(\lambda_1, \lambda_2, \ldots, \lambda_r, \ldots)$. The weight of partition $\lambda$ is   $|\lambda|=\lambda_1+\lambda_2+\ldots+\lambda_r+\ldots$. Additionally, the partition $\lambda$ can also be expressed  as $\lambda=(1^{m_{1}}2^{m_{2}}\cdots r^{m_{r}}\cdots)$, the number
\begin{eqnarray}
&&m_{i}=m_{i}(\lambda)=\text{Card}\{j:\lambda_{j}=i\}
\end{eqnarray}
is called the multiplicity of $i$ in $\lambda$.

The Young diagram of the partition $\lambda$ is formed by placing $\lambda_{i}$ boxes in the $i$-th row. Let $[N,M]$ denote the rectangular Young diagram comprising $N$ rows and $M$ columns. The  notation $\lambda\subseteq[N,M]$ means the Young diagram of the partition $\lambda$ can be  contained  in the rectangle $[N,M]$. For example, the Young diagram of the partition $\lambda=(7,6,4,3,1)$ is
\begin{eqnarray}
\scalebox{1.0}{$\superY{\,&\,&\,&\,&\,&\,&\\\,&\,&\,&\,&\,&\\\,&\,&\,&\\\,&\,&\\\,}$},
\end{eqnarray}
and this partition satisfies $\lambda\subset[5,7]$.

For a pair of partitions $\lambda=(\lambda_{1},\ldots,\lambda_{l+1})$ and $\mu=(\mu_{1},\ldots,\mu_{l})$ , we say that the partition $\lambda$ interlaces $\mu$
if and only if $\lambda_{i}\geq\mu_{i}\geq\ \lambda_{i+1}$ for all $1\leq i\leq l$, which is denoted as $\lambda\succ\mu$.

Schur function $s_{\mu}\{\mathbf{x}\}$ for variables $\{\mathbf{x}\}=\{x_{1},x_{2},\ldots,x_{n}\}$ is defined by
\begin{eqnarray}\label{schur}
s_\mu\{\mathbf{x}\}=\frac{\det\left(x_{i}^{\mu_{j}+n-j}\right)_{1 \leq i,j \leq n}}{\det\left(x_{i}^{n-j}\right)_{1 \leq i,j \leq n}}.
\end{eqnarray}

It is well-known that
\begin{eqnarray}\label{schur-skew-schur}
s_\mu\{\mathbf{x}\}= \sum_{\nu \subseteq [n-1,\infty)} s_{\mu/\nu}(x_n) \, s_\nu\{x_1, \ldots, x_{n-1}\},
\end{eqnarray}	
here $s_{\mu/\nu}(x_n)$ denotes the single variable skew Schur function, which is expressed as
\begin{align}\label{skew-schur}
s_{\mu/\nu}(x_{n}) = \begin{cases}
	x_{n}^{|\mu| - |\nu|}, & \mu \succ \nu, \\
	0, & \text{otherwise}.
\end{cases}.
\end{align}

The UC $S_{[\lambda,\mu]}\{\mathbf{x,y}\}$ is given by
\begin{eqnarray}
&&S_{[\lambda,\mu]}\{\mathbf{x,y}\}=\det
\left(\begin{array}{lll}
p_{\mu_{m-i+1}+i-j}\{\mathbf{y}\},\quad \quad  \ 1\leq i\leq m\\
p_{\lambda_{i-m}-i+j}\{\mathbf{x}\},\quad \quad m+1\leq i\leq l+m
\end{array}\right)_{1\leq i,j\leq l+m},
\end{eqnarray}
where $p_{n}\{\mathbf{x}\}$ is the elementary Schur polynomial expressed as $p_n(x) = \sum\limits_{\substack{k_1 + 2k_2 + \dots + n k_n = n \\ k_1, k_2, \dots, k_n \ge 0}} \frac{x_1^{k_1} x_2^{k_2} \cdots x_n^{k_n}}{k_1! \, k_2! \, \cdots \, k_n!}$. In terms of the Littlewood-Richardson coefficients $C^{\mu}_{\nu\tau}$ and $C^{\lambda}_{\xi\tau}$, a product of the Schur functions can be written as a sum of the UCs
\begin{eqnarray}
s_{\mu}\{\mathbf{x}\}s_{\lambda}\{\mathbf{y}\}=
\sum\limits_{\tau,\nu,\xi\in\mathcal{P}}C^{\mu}_{\nu\tau}C^{\lambda}_{\xi\tau}
S_{[\nu,\xi]}\{\mathbf{x},\mathbf{y}\}.
\end{eqnarray}

 Let $P$ and $Q$ be operators on a Hilbert space. The BCH formula  indicates that $\exp(P)\exp(Q)=\exp(R)$, here the operator $R$ is expressed as
\begin{eqnarray}
R=P+Q+\frac{1}{2}[P,Q]+\frac{1}{12}[P,[P,Q]]+\frac{1}{12}[[P,Q],Q]+\cdots.
\end{eqnarray}
For positive integers $k$, the higher-order nested commutators are recursively defined as $[P^{(k)},Q]=[P,[P^{(k-1)},Q]]$ and $[P,Q^{(k)}]=[[P,Q^{(k-1)}],Q]$, with their corresponding coefficients denoted by $\alpha_{k}$ and $\beta_{k}$, respectively. It is clear that
\begin{eqnarray}
\alpha_{k}=\beta_{k}=\frac{(-1)^{k}}{k!}\mathcal{B}_{k},
\end{eqnarray}
where $\mathcal{B}_{k}$  refers to Bernoulli numbers determined by the generating
function
\begin{eqnarray} \frac{w}{e^{w}-1}=\sum\limits_{k=0}^{\infty}\mathcal{B}_{k}\frac{w^{k}}{k!},
\quad|w|<2\pi.
\end{eqnarray}
We introduce the function $w=f(t)=-\ln(1-t)=\sum\limits_{n=1}^{\infty}\frac{1}{n}t^{n}$. For any integer $m\geq1$, the expansion  $(f(t))^{m}=\sum\limits_{s\geq m}\mathcal{K}_{s}^{m}t^{s}$.
\begin{lemma}
The coefficients $\alpha_{i}$ and $\mathcal{K}_{l}^{i}$ satisfy the following identities
\begin{eqnarray}
\sum\limits_{i=1}^{l}\alpha_{i}\mathcal{K}_{l}^{i}=\frac{1}{l+1},\quad \sum\limits_{j=1}^{l}\frac{\mathcal{K}_{l}^{j}}{j!}=1.
\end{eqnarray}
\end{lemma}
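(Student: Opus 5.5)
Both identities follow from one generating-function argument. Since $f(t)=-\ln(1-t)$ has zero constant term, the composition $g(f(t))$ is a well-defined formal power series in $t$ for any formal series $g(w)=\sum_{i\ge 1} g_i w^i$. Its coefficient of $t^l$ is $\sum_{i\ge1} g_i\mathcal{K}_l^i$. Because $(f(t))^i=\sum_{s\ge i}\mathcal{K}^i_s t^s$, we have $\mathcal{K}_l^i=0$ for $i>l$, so this coefficient equals the finite sum $\sum_{i=1}^{l} g_i\mathcal{K}_l^i$. The plan is therefore to choose $g$ suitably and identify $g(f(t))$ in closed form.

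\emph{Second identity.} Take $g(w)=\sum_{j\ge1} w^j/j!=e^{w}-1$. Then
\begin{eqnarray*}
g(f(t))=e^{-\ln(1-t)}-1=\frac{1}{1-t}-1=\sum_{l\ge1}t^l .
\end{eqnarray*}
Comparing coefficients of $t^l$ gives $\sum_{j=1}^{l}\mathcal{K}_l^j/j!=1$ for every $l\ge1$.

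\emph{First identity.} Take $g(w)=\sum_{k\ge1}\alpha_k w^k$ with $\alpha_k=\frac{(-1)^k}{k!}\mathcal{B}_k$. Substituting $w\mapsto -w$ in the Bernoulli generating function gives
\begin{eqnarray*}
\sum_{k\ge0}\alpha_k w^k=\frac{-w}{e^{-w}-1}=\frac{w}{1-e^{-w}} .
\end{eqnarray*}
Since $\alpha_0=\mathcal{B}_0=1$, it follows that $g(w)=\frac{w}{1-e^{-w}}-1$. Now put $w=f(t)$. Then $e^{-w}=1-t$, so $1-e^{-w}=t$, and
\begin{eqnarray*}
g(f(t))=\frac{-\ln(1-t)}{t}-1=\sum_{n\ge1}\frac{t^{n}}{n+1} .
\end{eqnarray*}
Reading off the coefficient of $t^l$ yields $\sum_{i=1}^{l}\alpha_i\mathcal{K}_l^i=\frac{1}{l+1}$.

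The only delicate point is justifying the substitution. The identity $\frac{w}{1-e^{-w}}=\sum_k\alpha_k w^k$ holds as formal power series; the condition $|w|<2\pi$ is only needed for analytic convergence and plays no role here. Composing it with $f(t)\in t\,\mathbb{C}[[t]]$ is legitimate, and the resulting identity $\frac{f}{1-e^{-f}}=\frac{-\ln(1-t)}{t}$ holds in $\mathbb{C}[[t]]$ because $\exp$ and $\ln$ are inverse as formal series.

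We should also check the sign convention. We use $\mathcal{B}_1=-\frac12$, as forced by the stated generating function $\frac{w}{e^w-1}$. This gives $\alpha_1=\frac12$, which matches the BCH coefficient of $[P,Q]$ and the case $l=1$ of the first identity: $\mathcal{K}_1^1=1$ and $\alpha_1\mathcal{K}_1^1=\frac12=\frac1{1+1}$.
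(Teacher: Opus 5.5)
Your proof is correct and complete. Composing the formal series $e^{w}-1$ and $\frac{w}{1-e^{-w}}-1$ with $f(t)\in t\,\mathbb{C}[[t]]$ and using $e^{-f(t)}=1-t$ gives $\sum_{l\ge1}t^l$ and $\sum_{l\ge1}\frac{t^l}{l+1}$, and comparing coefficients of $t^l$ yields the two identities. The paper states this lemma without proof, importing it from the cited literature. Its setup, which introduces $w=f(t)=-\ln(1-t)$ right after the Bernoulli generating function in $w$, is exactly the substitution you make, so yours is evidently the intended argument.
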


\section{The bosonic universal character hierarchy and tau functions}
The aim of this section is to construct the bosonic UC hierarchy  and discuss its tau functions.  We also propose the representation of the Lie algebra $\mathfrak{\widehat{gl}}_{2\infty}$ corresponding to the bosonic UC hierarchy in terms of charged free bosons.

\subsection{The bosonic universal character hierarchy}
\begin{definition}
For an unknown function $\tau$, the system of bilinear relations
\begin{eqnarray}\label{bosoc_uc_hierarchy}
\sum\limits_{i\in\mathbb{Z}}a^{\ast}_{i}\tau\otimes a_{-i}\tau=\sum\limits_{j\in\mathbb{Z}}a^{\prime\ast}_{j}\tau\otimes a^{\prime}_{-j}\tau=0
\end{eqnarray}
is called the bosonic UC hierarchy.

According to the Eq.(\ref{boson_generating_function}), the bosonic UC hierarchy can be rewritten equivalently into the form
\begin{eqnarray}
\text{Res}_{z}a^{\ast}(z)\otimes a(z)(\tau\otimes\tau)=\text{Res}_{z}a^{\prime\ast}(z)\otimes a^{\prime}(z)(\tau\otimes\tau)=0.
\end{eqnarray}
\end{definition}

\begin{proposition}\label{vac_solution}
If $\tau\in\mathcal{M}$ is the tau function of the bosonic UC hierarchy, then $\tau=|\text{vac}\rangle$ up to a constant.
\end{proposition}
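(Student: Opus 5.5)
The plan is to treat each of the two bilinear identities as a separate constraint on a single bosonic factor. We then show that each constraint forces $\tau$ to be annihilated by every creation-type operator of that species.

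\textbf{Step 1: reduce to one species.} Write $\tau$ in the basis of $\mathcal{M}$ as a finite combination
\[
\tau=\sum_{\mathbf{n},\mathbf{n}'} c_{\mathbf{n},\mathbf{n}'}\,|n';n\rangle .
\]
Here $|n'\rangle$ is a monomial in the primed creation operators $a'_{i}$ $(i<0)$ and $a'^{\ast}_{j}$ $(j\le 0)$, and $|n\rangle$ is a monomial in the unprimed ones. The two families commute, so $\mathcal{M}\cong\mathcal{M}^{(a)}\otimes\mathcal{M}^{(a')}$. The operator $\Omega=\sum_i a^\ast_i\otimes a_{-i}$ acts only on the unprimed tensor factors. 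Expanding $\tau\otimes\tau$ and grouping terms by the linearly independent primed parts, the first relation becomes a statement about the unprimed factor alone. The cleanest way to extract this is to pair with suitable dual basis vectors of the primed part using \eqref{double_product}. Likewise, the second relation constrains only the primed factor. It therefore suffices to prove the following one-species claim: if $0\neq T\in\mathcal{M}^{(a)}\otimes\mathcal{M}^{(a)}$ has the form $\sum_k \tau_k\otimes\tau_k'$ coming from $\tau\otimes\tau$ and satisfies $\Omega T=0$, then every $\tau_k$ appearing is a multiple of $|\text{vac}\rangle$.

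\textbf{Step 2: the one-species claim via a maximal term.} For $\tau\in\mathcal{M}^{(a)}$, split $\Omega$ as
\[
\Omega=\sum_{i\le 0}a^\ast_i\otimes a_{-i}+\sum_{i>0}a^\ast_i\otimes a_{-i}.
\]
In the first sum, $a^\ast_i$ with $i\le0$ creates and $a_{-i}$ with $-i\ge0$ annihilates. In the second sum, $a^\ast_i$ with $i>0$ annihilates and $a_{-i}$ with $-i<0$ creates. Suppose $\tau\neq c|\text{vac}\rangle$. Introduce a grading, namely the total number of creation operators (with a degree or lexicographic refinement), and choose a basis monomial $|n\rangle$ of $\tau$ that is maximal with respect to it.

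Apply $a^\ast_j\otimes 1$, for $j\le0$, to the maximal term in the left slot, and pair it with an annihilation of a creation operator $a_{-j}$ in the right slot. If $|n\rangle$ contains some $a^{\ast}_{-j}$ or $a_{j}$ factor, take the pairing $\langle\text{vac}|$-type coefficient extraction. This produces a component of $\Omega(\tau\otimes\tau)$ of the form (new maximal monomial)$\otimes$(lowered monomial), with a nonzero coefficient computed from $[a_m,a^\ast_n]=\delta_{m+n,0}$, giving factors $n_j$ and $n_j!$. We then verify that no other pair of terms can cancel it. The reason is that the left factor has strictly larger degree than anything obtainable from non-maximal terms, and different choices of $j$ give distinct basis vectors.

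Hence every maximal monomial must be free of all creation operators, i.e.\ $\tau\in\mathbb{C}|\text{vac}\rangle$. Conversely, $|\text{vac}\rangle\otimes|\text{vac}\rangle$ is killed by $\Omega$, because in each summand one of the two factors annihilates the vacuum by \eqref{boson annihilation relations}.

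\textbf{Main obstacle.} The delicate point is the non-cancellation argument in Step 2, because bosonic creation operators may appear with arbitrary multiplicities.

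The first thing to try is to pick an extreme index: say $i_{l}$ minimal among the $a$-creators (or $j_k$ among the $a^\ast$-creators) occurring in the top monomial. Then look at the term of $\Omega(\tau\otimes\tau)$ in which the right-hand factor has that creator removed and the left-hand factor gains $a^\ast_{i}$. Such a term can only arise from (top monomial)$\otimes$(top monomial).

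Its coefficient is a product of multiplicities and $c_{\mathbf n}^2$, which is nonzero because we work over $\mathbb{C}$ with positive integer multiplicity factors. If this leading-term bookkeeping fails because of mixed $a$ and $a^\ast$ occurrences, the fallback is to pair against $\langle\text{vac}|\otimes\langle m|$ with $\langle m|$ dual to the lowered monomial. That reduces the relation to $a^\ast_j$-type creation equations $\langle\text{vac}|a_{-j}\ldots$, which can be checked directly.
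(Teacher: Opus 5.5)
\textbf{Verdict: genuine gap, but repairable.} Your overall strategy is the paper's: exhibit a component of $\sum_i a^{\ast}_i\tau\otimes a_{-i}\tau$ that only one pair of terms can produce. Step 1 is fine once stated precisely. Write $\tau=\sum_\beta e'_\beta\phi_\beta|\text{vac}\rangle$ with $e'_\beta$ distinct primed monomials. The first relation is then equivalent to $\sum_i a^{\ast}_i\phi_\beta|\text{vac}\rangle\otimes a_{-i}\phi_\gamma|\text{vac}\rangle=0$ for all $\beta,\gamma$, and the diagonal cases $\beta=\gamma$ give the one-species problem.

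The gap is the non-cancellation step you flag as the main obstacle, and the reason you give for it is wrong. The total-degree grading only separates the top-degree part of $\tau$ from the lower-degree parts. Within the top-degree part, the component along $a^{\ast}_jL|\text{vac}\rangle\otimes a_{-j}L|\text{vac}\rangle$ also receives contributions from any pair $(m_1,m_2)$ of top-degree monomials with $a^{\ast}_{j'}m_1=a^{\ast}_jL$ and $a_{-j'}m_2\propto a_{-j}L$. These conditions only force $m_1m_2=L^2$, not $m_1=m_2=L$. Concretely, take $\tau=(c_1a^{\ast}_0a^{\ast}_{-1}+c_2a^{\ast 2}_0+c_3a^{\ast 2}_{-1})|\text{vac}\rangle$ and $L=a^{\ast}_0a^{\ast}_{-1}$, which is maximal for the number of creation operators. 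The coefficient of $a^{\ast 2}_0a^{\ast}_{-1}|\text{vac}\rangle\otimes a^{\ast}_{-1}|\text{vac}\rangle$ (your choice $j=0$) and the coefficient of $a^{\ast}_0a^{\ast 2}_{-1}|\text{vac}\rangle\otimes a^{\ast}_{0}|\text{vac}\rangle$ (your extreme index $j=-1$) both equal $c_1^2+2c_2c_3$. This vanishes for $c_1=2$, $c_2=1$, $c_3=-2$. So neither ``different $j$ give distinct basis vectors'' nor the extreme-index refinement proves the claim, and the proposed fallback pairing is not an argument.

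There are two clean repairs. The first keeps your approach: take $L$ to be the leading monomial of $\tau$ for a genuinely multiplicative monomial order, e.g.\ graded lexicographic. Then $m_1,m_2\preceq L$ together with $m_1m_2=L^2$ forces $m_1=m_2=L$ and $j'=j$, so the coefficient is a nonzero integer multiple of $c_L^2$.

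The second is the paper's route, which compares no monomials at all. Pick a creation operator $a_{-N}$ occurring in $\tau$ (the paper takes $N$ maximal, but any such $N$ works) and let $m\ge 1$ be its maximal exponent. Every summand of $\Omega=\sum_i a^{\ast}_i\otimes a_{-i}$ other than $a^{\ast}_N\otimes a_{-N}$ leaves the exponent of $a_{-N}$ in the right factor unchanged: it either multiplies by another $a_{-i}$ with $i>0$, or applies an annihilator $a_{-i}$ with $i\le 0$, which acts only on the $a^{\ast}$'s. Hence the part of $\Omega(\tau\otimes\tau)$ with exponent $m+1$ on the right is the single pure tensor $a^{\ast}_N\tau\otimes a_{-N}\tau^{(m)}$, where $\tau^{(m)}$ is the $a_{-N}^{m}$-part of $\tau$. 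Both factors are nonzero, since $a^{\ast}_N$ acts as $-\partial/\partial a_{-N}$. The primed bosons are mere spectators, so your Step 1 reduction is not needed on this route.

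Two further points. For $a$-creators the direction is the reverse of what your last paragraph says: $a^{\ast}_N$ removes $a_{-N}$ from the left factor, and $a_{-N}$ is added on the right. If $\tau$ contains no $a_{-n}$ but some $a^{\ast}_{-J}$ with $J\ge 0$, you need the mirror argument with $a^{\ast}_{-J}\otimes a_J$ and the exponent of $a^{\ast}_{-J}$ in the left factor. The paper's proof passes over this case silently, and your two-sided treatment rightly anticipates it.
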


\begin{proof}
It follows from the Eq.(\ref{boson annihilation relations}) that
\begin{eqnarray}
&&\sum\limits_{i\in\mathbb{Z}}a^{\ast}_{i}|\text{vac}\rangle\otimes a_{-i}|\text{vac}\rangle\notag\\
&=&\sum\limits_{i>0}a^{\ast}_{i}|\text{vac}\rangle\otimes a_{-i}|\text{vac}\rangle+a^{\ast}_{0}|\text{vac}\rangle\otimes a_{0}|\text{vac}\rangle+\sum\limits_{i<0}a^{\ast}_{i}|\text{vac}\rangle\otimes a_{-i}|\text{vac}\rangle\notag\\
&=&0.
\end{eqnarray}
Similarly, one can show that
\begin{eqnarray}
\sum\limits_{j\in\mathbb{Z}}a^{\prime\ast}_{j}|\text{vac}\rangle\otimes a^{\prime}_{-j}|\text{vac}\rangle=0.
\end{eqnarray}
Hence, $\tau=|\text{vac}\rangle$ is a tau function of the bosonic UC hierarchy.

Suppose that $\tau\neq|\text{vac}\rangle$ is a tau function of the bosonic UC hierarchy  and can be expressed as a sum of monomials formed from the basis of the Fock space $\mathcal{M}$. Let $N>0$ and $N^{\prime}>0$ be the largest integers for which $a_{-N}$ and $a^{\prime}_{-N^{\prime}}$ appear in the tau function, respectively. Then the tau function can be represented as
\begin{eqnarray}
\tau=\sum\limits^{m}_{k=0}\sum\limits^{s}_{l=0}a^{k}_{-N}a^{\prime l}_{-N^{\prime}}P_{k,l}(a_{-N+1},\ldots,a_{-1};\ldots,a^{\ast}_{-1},a^{\ast}_{0};
a^{\prime}_{-N^{\prime}+1},\ldots,a^{\prime}_{-1};\ldots,a^{\prime\ast}_{-1},
a^{\prime\ast}_{0})|\text{vac}\rangle,
\end{eqnarray}
where $P_{m,s}\neq0$ for all $m,s\geq1$, and that each $P_{k,l}$ with $k,l\geq0$ is a  linear combination of the basis elements of the Fock space $\mathcal{M}$. Moreover, the largest indices $n$ and $n^{\prime}$ such that $a_{-n}$ and $a^{\prime}_{-n^{\prime}}$ appear in $P_{k,l}$  satisfy $n\leq N-1$ and $n^{\prime}\leq N^{\prime}-1$, respectively. Then we obtain
\begin{eqnarray}
\sum\limits_{i\in\mathbb{Z}}a^{\ast}_{i}\tau\otimes a_{-i}\tau&=&\sum\limits_{i>N}a^{\ast}_{i}\tau\otimes a_{-i}\tau+a^{\ast}_{N}\tau\otimes a_{-N}\tau+\sum\limits_{i<N}a^{\ast}_{i}\tau\otimes a_{-i}\tau\notag\\
&=&\sum\limits^{m}_{k=1}-ka^{k-1}_{-N}\sum\limits^{s}_{l=0}a^{\prime l}_{-N^{\prime}}P_{k,l}|\text{vac}\rangle\otimes\sum\limits^{m}_{k=0}a^{k+1}_{-N}
\sum\limits^{s}_{l=0}a^{\prime l}_{-N^{\prime}}P_{k,l}|\text{vac}\rangle\notag\\
&&+\sum\limits_{i<N}a^{\ast}_{i}\sum\limits^{m}_{k=0}a^{k}_{-N}\sum\limits^{s}_{l=0}
a^{\prime l}_{-N^{\prime}}P_{k,l}|\text{vac}\rangle\otimes a_{-i}\sum\limits^{m}_{k=0}a^{k}_{-N}\sum\limits^{s}_{l=0}
a^{\prime l}_{-N^{\prime}}P_{k,l}|\text{vac}\rangle.
\end{eqnarray}
It should be noted that the second summand $\sum\limits_{i<N}a^{\ast}_{i}\tau\otimes a_{-i}\tau$ can involve at most $a^{m}_{-N}$ on the right of the tensor products. However, the item $a^{\ast}_{N}\tau\otimes a_{-N}\tau$ contains at most $-a^{m+1}_{-N}$, which cannot cancel each other out with $a^{m}_{-N}$. It can be concluded that $\sum\limits_{i\in\mathbb{Z}}a^{\ast}_{i}\tau\otimes a_{-i}\tau\neq0$. By the same argument, one has $\sum\limits_{j\in\mathbb{Z}}a^{\prime\ast}_{j}\tau\otimes a^{\prime}_{-j}\tau\neq0$. Therefore, $\tau=|\text{vac}\rangle$ up to a constant.
\end{proof}

\subsection{Tau functions}
We first construct the Lie algebra corresponding to the bosonic UC hierarchy.

Let $A=(a_{ij})_{i,j\in\mathbb{Z}}$ be the infinite complex matrices satisfying
\begin{eqnarray}\label{a_martix}
&&a_{ij}=0\quad \text{for}\quad |i-j|\gg0.
\end{eqnarray}
Then $\mathfrak{\overline{gl}}_{\infty}=\{(a_{ij})_{i,j\in\mathbb{Z}}|a_{ij}=0,|i-j|\gg0\}$ forms an infinite-dimensional Lie algebra with the Lie bracket $[A,B]=AB-BA$.

Introduce the Lie algebra $\mathfrak{\widehat{gl}}_{2\infty}=\mathfrak{\overline{gl}}_{\infty}\oplus
\mathfrak{\overline{gl}}_{\infty}\oplus\mathbb{C}c$ with a one dimensional central extension by the following bracket relation
\begin{eqnarray}\label{affine_bracket}
[A\oplus A^{\prime}\oplus\alpha c,B\oplus B^{\prime}\oplus\beta c]=(AB-BA)\oplus(A^{\prime}B^{\prime}-B^{\prime}A^{\prime})\oplus
\left(\mu(A,B)+\mu(A^{\prime},B^{\prime})\right)c,
\end{eqnarray}
where $A,B,A^{\prime},B^{\prime}\in\mathfrak{\overline{gl}}_{\infty}$, $\alpha,\beta\in\mathbb{C}$ and $\mu(A,B)$ denotes a 2-cocycle on $\mathfrak{\overline{gl}}_{\infty}$ defined by
\begin{eqnarray}
\mu(E_{ij},E_{kl})=\delta_{j,k}\delta_{i,l}\left(\theta(l\leq0)-\theta(k\leq0)\right),
\end{eqnarray}
here $E_{ij}$ means an infinite matrix with the $(i,j)$-entry 1 and zeros elsewhere.

\begin{proposition}
Consider the map $\rho:\mathfrak{\widehat{gl}}_{2\infty}\rightarrow End(\mathcal{M})$ by
\begin{eqnarray}
&&\rho(\overline{E}_{ij})=-:a_{-i}a^{\ast}_{j}:,\\
&&\rho(\overline{E^{\prime}}_{ij})=-:a^{\prime}_{-i}a^{\prime\ast}_{j}:,\\
&&\rho(c)=-1,
\end{eqnarray}
where $\overline{E}_{ij}=E_{ij}\oplus\mathbf{0}, \overline{E^{\prime}}_{ij}=\mathbf{0}
\oplus{E}_{ij}\in \mathfrak{\overline{gl}}_{\infty}\oplus
\mathfrak{\overline{gl}}_{\infty}$, and $\mathbf{0}$ is the zero matrix. Then $\rho$ is the representation of $\mathfrak{\widehat{gl}}_{2\infty}$ on $\mathcal{M}$.
\end{proposition}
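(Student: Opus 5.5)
We have to show that $\rho$ respects the bracket (\ref{affine_bracket}). Concretely, for the images of the generators we need
\begin{eqnarray*}
&&[\rho(\overline{E}_{ij}),\rho(\overline{E}_{kl})]=\rho([\overline{E}_{ij},\overline{E}_{kl}])
=\delta_{jk}\rho(\overline{E}_{il})-\delta_{li}\rho(\overline{E}_{kj})+\mu(E_{ij},E_{kl})\rho(c),\\
&&[\rho(\overline{E^{\prime}}_{ij}),\rho(\overline{E^{\prime}}_{kl})]=\rho([\overline{E^{\prime}}_{ij},\overline{E^{\prime}}_{kl}]),\qquad
[\rho(\overline{E}_{ij}),\rho(\overline{E^{\prime}}_{kl})]=0,
\end{eqnarray*}
together with the fact that $\rho(c)=-1$ commutes with everything. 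Extension from generators to all of $\mathfrak{\widehat{gl}}_{2\infty}$ (infinite matrices with finitely many nonzero diagonals) is by linearity. On any fixed vector of $\mathcal{M}$, only finitely many terms $:a_{-i}a^{\ast}_{j}:$ with $|i-j|$ bounded act nontrivially, because of the annihilation relations (\ref{boson annihilation relations}). So $\rho(A)=-\sum a_{ij}:a_{-i}a^{\ast}_j:$ is a well-defined endomorphism, and commutators of such sums may be computed termwise.

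The key step is the unprimed commutator. Normal ordering differs from the plain product by a scalar, $:a_{-i}a^{\ast}_j:=a_{-i}a^{\ast}_j-\delta_{ij}\theta(j\le 0)$, so inside commutators we may drop the colons. Using the derivation property and the relations (\ref{boson_commutation}), $[a_m,a^{\ast}_n]=\delta_{m+n,0}$, we get
\[
[a_{-i}a^{\ast}_j,a_{-k}a^{\ast}_l]=\delta_{jk}\,a_{-i}a^{\ast}_l-\delta_{il}\,a_{-k}a^{\ast}_j .
\]
Here the first term comes from $a_{-i}[a^{\ast}_j,a_{-k}]$ with $[a^{\ast}_j,a_{-k}]=-\delta_{jk}$, and sign bookkeeping is needed. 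Multiplying by $(-1)^2=1$ gives the left side. We then rewrite $a_{-i}a^{\ast}_l=:a_{-i}a^{\ast}_l:+\delta_{il}\theta(l\le0)$, and similarly for the other term. This yields
\[
-\delta_{jk}\rho(\overline{E}_{il})+\delta_{il}\rho(\overline{E}_{kj})+\delta_{jk}\delta_{il}\bigl(\theta(l\le0)-\theta(j\le0)\bigr).
\]
Comparing with $\rho([\overline E_{ij},\overline E_{kl}])$, where $\rho(c)=-1$ and $\mu(E_{ij},E_{kl})=\delta_{jk}\delta_{il}(\theta(l\le0)-\theta(k\le0))$, fixes the required overall sign conventions.

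This sign matching is the main obstacle. The bosonic commutator $[a^{\ast}_j,a_{-k}]=-\delta_{jk}$ has the opposite sign from the fermionic anticommutator, and the minus sign in $\rho$ together with $\rho(c)=-1$ are presumably chosen to compensate. I would carry the signs carefully. If a global sign discrepancy appears in the linear part, I would check it against the convention $\rho(\overline{E}_{ij})=-:a_{-i}a^{\ast}_j:$ and the cocycle as stated. The cocycle term $\delta_{jk}\delta_{il}(\theta(l\le0)-\theta(k\le0))$, with $k=j$, arises exactly from the two vacuum-expectation corrections, and should match $\mu$ times $\rho(c)$.

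The primed case is identical, replacing $a,a^{\ast}$ by $a^{\prime},a^{\prime\ast}$. Mixed commutators vanish by the last line of (\ref{boson_commutation}), which matches $[\overline E_{ij},\overline{E^{\prime}}_{kl}]=0$ with no central term. Finally, we verify that the cocycle contributions add, $\mu(A,B)+\mu(A^{\prime},B^{\prime})$, as in (\ref{affine_bracket}). This holds because the primed and unprimed computations are independent. Together these show that $\rho$ is a Lie algebra homomorphism, hence a representation of $\mathfrak{\widehat{gl}}_{2\infty}$ on $\mathcal{M}$.
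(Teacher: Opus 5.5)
Your plan (drop the colons inside the commutator, apply the commutation relations, restore normal ordering, read off the cocycle) is the same as the paper's. However, the one computation that carries the content of the proposition is done with the wrong sign and then left unresolved. The identity $[a_{-i}a^{\ast}_j,a_{-k}a^{\ast}_l]=\delta_{jk}a_{-i}a^{\ast}_l-\delta_{il}a_{-k}a^{\ast}_j$ is false. By your own remark $[a^{\ast}_j,a_{-k}]=-\delta_{jk}$, the term $a_{-i}[a^{\ast}_j,a_{-k}]a^{\ast}_l$ equals $-\delta_{jk}a_{-i}a^{\ast}_l$, while $a_{-k}[a_{-i},a^{\ast}_l]a^{\ast}_j=+\delta_{il}a_{-k}a^{\ast}_j$. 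This is the paper's relation $[a_ma^{\ast}_n,a_sa^{\ast}_t]=a_sa^{\ast}_n\delta_{m+t,0}-a_ma^{\ast}_t\delta_{n+s,0}$ with $m=-i$, $n=j$, $s=-k$, $t=l$. Consequently the expression you reach, $-\delta_{jk}\rho(\overline{E}_{il})+\delta_{il}\rho(\overline{E}_{kj})+\delta_{jk}\delta_{il}\bigl(\theta(l\le0)-\theta(j\le0)\bigr)$, is exactly $-\rho([\overline{E}_{ij},\overline{E}_{kl}])$. Both the linear part and the central part have the wrong sign, so taken at face value your computation shows that $\rho$ reverses brackets. Your claim that the central term ``should match $\mu$ times $\rho(c)$'' is also false for what you derived. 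Saying that the comparison ``fixes the required sign conventions'', or that you would check a discrepancy if one appeared, verifies nothing, and this sign check is the entire content of the statement.

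With the correct sign the argument closes. Substituting $a_{-k}a^{\ast}_j=:a_{-k}a^{\ast}_j:+\delta_{jk}\theta(j\le0)$ and $a_{-i}a^{\ast}_l=:a_{-i}a^{\ast}_l:+\delta_{il}\theta(l\le0)$ gives
\[
[\rho(\overline{E}_{ij}),\rho(\overline{E}_{kl})]=\delta_{jk}\rho(\overline{E}_{il})-\delta_{il}\rho(\overline{E}_{kj})+\delta_{jk}\delta_{il}\bigl(\theta(j\le0)-\theta(l\le0)\bigr).
\]
Since $\mu(E_{ij},E_{kl})\rho(c)=-\delta_{jk}\delta_{il}\bigl(\theta(l\le0)-\theta(j\le0)\bigr)$, this is precisely $\rho([\overline{E}_{ij},\overline{E}_{kl}])$. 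So, as you guessed, the minus sign in $\rho$ turns $E_{ij}\mapsto a_{-i}a^{\ast}_j$ (which reverses brackets for bosons) into a homomorphism, and $\rho(c)=-1$ makes the vacuum corrections reproduce $\mu$. But this must be computed, not presumed. Once the sign is fixed, your proof coincides with the paper's. Your remarks on the mixed brackets and on why $\rho(A)$ is well defined on $\mathcal{M}$ are correct, and the latter is a point the paper does not address.
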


\begin{proof}
By means of
\begin{eqnarray}\label{boson_four_relation}
&&\left[a_{m}a^{\ast}_{n},a_{s}a^{\ast}_{t}\right]=a_{s}a^{\ast}_{n}\delta_{m+t,0}
-a_{m}a^{\ast}_{t}\delta_{n+s,0},
\end{eqnarray}
we obtain
\begin{eqnarray}
[\rho(\overline{E}_{ij}),\rho(\overline{E}_{kj})]
&=&\left[-:a_{-i}a^{\ast}_{j}:,
-:a_{-k}a^{\ast}_{l}:\right]\notag\\
&=&\left[a_{-i}a^{\ast}_{j}-\langle\text{vac}|a_{-i}a^{\ast}_{j}|\text{vac}\rangle,
a_{-k}a^{\ast}_{l}-\langle\text{vac}|a_{-k}a^{\ast}_{k}|\text{vac}\rangle\right]\notag\\
&=&\left[a_{-i}a^{\ast}_{j},a_{-k}a^{\ast}_{l}\right]\notag\\
&=&a_{-k}a^{\ast}_{j}\delta_{i,l}-a_{-i}a^{\ast}_{l}\delta_{k,j}\notag\\
&=&\delta_{i,l}\left(:a_{-k}a^{\ast}_{j}:+\langle\text{vac}|a_{-k}a^{\ast}_{j}|
\text{vac}\rangle\right)-\delta_{j,k}\left(:a_{-i}a^{\ast}_{l}:+\langle\text{vac}|a_{-i}
a^{\ast}_{l}|\text{vac}\rangle\right)\notag\\
&=&\delta_{j,k}\rho(\overline{E}_{il})-\delta_{i,l}\rho(\overline{E}_{kj})
+\delta_{i,l}\delta_{j,k}\left(\theta(j\leq0)-\theta(l\leq0)\right).
\end{eqnarray}
It is obvious that
\begin{eqnarray}
[\overline{E}_{ij},\overline{E}_{kj}]=\delta_{j,k}\overline{E}_{il}
-\delta_{i,l}\overline{E}_{kj}+\delta_{i,l}\delta_{j,k}
\left(\theta(l\leq0)-\theta(j\leq0)\right)c.
\end{eqnarray}
As a result,
\begin{eqnarray}
[\rho(\overline{E}_{ij}),\rho(\overline{E}_{kj})]=\rho\left(
[\overline{E}_{ij},\overline{E}_{kj}]\right).
\end{eqnarray}
Similarly, the following  identity
\begin{eqnarray}
[\rho(\overline{E^{\prime}}_{ij}),\rho(\overline{E^{\prime}}_{kj})]=\rho\left(
[\overline{E^{\prime}}_{ij},\overline{E^{\prime}}_{kj}]\right)
\end{eqnarray}
holds. Furthermore, it follows from the Eqs.(\ref{boson_commutation}) and (\ref{affine_bracket}) that
\begin{eqnarray}
[\rho(\overline{E}_{ij}),\rho(\overline{E^{\prime}}_{kj})]
=\rho\left([\overline{E}_{ij},\overline{E^{\prime}}_{kj}]\right)=0.
\end{eqnarray}
Therefore, $\rho$ is the representation of $\mathfrak{\widehat{gl}}_{2\infty}$ on $\mathcal{M}$.
\end{proof}

We regard the Lie algebra $\mathfrak{\widehat{gl}}_{2\infty}$ as $\mathfrak{\widehat{gl}}_{2\infty}=\big\{X=\sum\limits_{i,j\in\mathbb{Z}}(
c_{ij}:a_{-i}a^{\ast}_{j}:+c^{\prime}_{ij}:a^{\prime}_{-i}a^{\prime\ast}_{j}:)
+c\big\}$, where $c_{ij}$ and $c^{\prime}_{ij}$ satisfy (\ref{a_martix}), and $c\in\mathbb{C}$. Consider the Lie group $\mathbf{G}$ associated with $\mathfrak{\widehat{gl}}_{2\infty}$
\begin{eqnarray}
\mathbf{G}=\left\{e^{X_{1}}e^{X_{2}}\cdots e^{X_{k}}|X_{i}\in\mathfrak{\widehat{gl}}_{2\infty}\right\}.
\end{eqnarray}
It is important to investigate the orbit of vacuum vector with respect to the action of  $\mathbf{G}$
\begin{eqnarray}
\mathbf{G}|\text{vac}\rangle=\left\{g|\text{vac}\ | g\in\mathbf{G} \right\}\subset \mathcal{M}_{(0,0)}\subset\mathcal{M}.
\end{eqnarray}

Based on the vector space $\mathcal{M}$ and dual vector space $\mathcal{M}^{\ast}$, introduce the completion vector spaces $\mathcal{\widetilde{M}}=\mathbb{C}[[a_{-i-1},a^{\ast}_{-i},a^{\prime}_{-j-1},
a^{\prime\ast}_{-j}|i,j\geq0]]|\text{vac}\rangle$ and $\mathcal{\widetilde{M}}^{\ast}=\langle\text{vac}|\mathbb{C}[[a_{i},a^{\ast}_{i+1},
a^{\prime}_{j},a^{\ast\prime}_{j+1}|i,j\geq0]]$. Next, the tau function will be discussed in  the completion vector space.

\begin{proposition}
The function
\begin{eqnarray}\label{tau_com}
\tau=\exp\big(\sum\limits_{i\geq0\atop{j>0}}c_{ij}a^{\ast}_{-i}a_{-j}+
\sum\limits_{k\geq0\atop{l>0}}c^{\prime}_{kl}a^{\prime\ast}_{-k}a^{\prime}_{-l}\big)
|\text{vac}\rangle \in \mathcal{\widetilde{M}}
\end{eqnarray}
is a tau function of the the bosonic UC hierarchy, where $c_{ij}\neq0, c^{\prime}_{kl}\neq0$.
\end{proposition}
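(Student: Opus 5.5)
The plan is to show that the bilinear operators defining the hierarchy commute with $g\otimes g$, where $g$ is the exponential in (\ref{tau_com}), and then to use Proposition~\ref{vac_solution}, which says the vacuum is annihilated by them. Set
\begin{eqnarray*}
\Omega=\sum_{n\in\mathbb{Z}}a^{\ast}_{n}\otimes a_{-n},\qquad \Omega^{\prime}=\sum_{n\in\mathbb{Z}}a^{\prime\ast}_{n}\otimes a^{\prime}_{-n},
\end{eqnarray*}
and write $g=e^{X+X^{\prime}}$ with $X=\sum c_{ij}a^{\ast}_{-i}a_{-j}$ and $X^{\prime}=\sum c^{\prime}_{kl}a^{\prime\ast}_{-k}a^{\prime}_{-l}$. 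By (\ref{boson_commutation}), the unprimed and primed sectors commute with each other. Within each sector, all the operators $a^{\ast}_{-i}a_{-j}$ with $i\geq 0$, $j>0$ also commute with one another, since $[a_{-j},a^{\ast}_{-i}]=\delta_{i+j,0}=0$. Hence $g=e^{X}e^{X^{\prime}}$, and $g$ is a well-defined element of the completed algebra acting on $\widetilde{\mathcal{M}}$.

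The key step is the local identity, for fixed $i,j$ and $Y=a^{\ast}_{-i}a_{-j}$,
\begin{eqnarray*}
[\,Y\otimes 1+1\otimes Y,\ \Omega\,]=0 .
\end{eqnarray*}
From (\ref{boson_commutation}) one gets $[Y,a^{\ast}_{n}]=\delta_{n,j}a^{\ast}_{-i}$ and $[Y,a_{m}]=-\delta_{m,i}a_{-j}$. Substituting these gives
\begin{eqnarray*}
[\,Y\otimes1+1\otimes Y,\Omega\,]=a^{\ast}_{-i}\otimes a_{-j}-a^{\ast}_{-i}\otimes a_{-j}=0 .
\end{eqnarray*}
By linearity, $\Omega$ commutes with $X\otimes 1+1\otimes X$. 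It commutes trivially with $X^{\prime}\otimes1+1\otimes X^{\prime}$, because the two sectors commute. Since $X\otimes 1$ and $1\otimes X$ commute, we have $g\otimes g=\exp\big((X+X^{\prime})\otimes1+1\otimes(X+X^{\prime})\big)$, and therefore $\Omega(g\otimes g)=(g\otimes g)\Omega$. Proposition~\ref{vac_solution} gives $\Omega(|\text{vac}\rangle\otimes|\text{vac}\rangle)=0$, so
\begin{eqnarray*}
\Omega(\tau\otimes\tau)=(g\otimes g)\,\Omega(|\text{vac}\rangle\otimes|\text{vac}\rangle)=0 .
\end{eqnarray*}
The same argument, with the roles of the sectors interchanged, gives $\Omega^{\prime}(\tau\otimes\tau)=0$.

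The main obstacle I expect is justifying these manipulations in the completion $\widetilde{\mathcal{M}}$, because both $\Omega$ and the exponent are infinite sums. I would handle this by truncation. First replace $X$ and $X^{\prime}$ by finite partial sums $X_{N}$ and $X^{\prime}_{N}$ over $i,j,k,l\leq N$. For these, the exponential series applied to the vacuum is a genuine element of $\mathcal{M}$, and only finitely many terms of $\Omega$ act nontrivially on any given monomial. So the argument above holds verbatim and gives $\Omega(\tau_{N}\otimes\tau_{N})=0$. Then observe that the coefficient of any fixed basis monomial in $\tau_{N}\otimes\tau_{N}$, and in $\Omega(\tau_{N}\otimes\tau_{N})$, stabilizes once $N$ is large, since each monomial involves only finitely many modes. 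Passing to the limit coefficientwise in $\widetilde{\mathcal{M}}\otimes\widetilde{\mathcal{M}}$ then yields $\Omega(\tau\otimes\tau)=\Omega^{\prime}(\tau\otimes\tau)=0$. This proves the proposition; the hypothesis $c_{ij},c^{\prime}_{kl}\neq0$ plays no role in this argument.
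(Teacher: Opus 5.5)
Your proof is correct and follows the same route as the paper. Like the paper, you show that $\sum_n a^{\ast}_n\otimes a_{-n}$ (and its primed analogue) commutes with $Y\otimes 1+1\otimes Y$ for each bilinear $Y$ in the exponent, hence with $g\otimes g$, and then use that it annihilates $|\text{vac}\rangle\otimes|\text{vac}\rangle$. Your truncation argument adds convergence details that the paper omits.

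One small slip in that step: $e^{X_N}|\text{vac}\rangle$ is not in $\mathcal{M}$, because $a^{\ast}_{-i}a_{-j}$ is a non-nilpotent creation operator, so the series is infinite. It does lie in the completion in finitely many modes and is a finite sum in each degree. Since $\Omega$ preserves total degree and only finitely many of its terms act nontrivially, your argument still goes through unchanged, applied degree by degree.
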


\begin{proof}
In terms of the commutation relations of charged free bosons (\ref{boson_commutation}), it is apparent that
\begin{eqnarray}\label{solution_jiaohuan}
&&\big[\sum\limits_{i\in\mathbb{Z}}a^{\ast}_{i}\otimes a_{-i},1\otimes(a^{\ast}_{m}a_{n}+a^{\prime\ast}_{p}a^{\prime}_{q})
+(a^{\ast}_{m}a_{n}+a^{\prime\ast}_{p}a^{\prime}_{q})\otimes1\big]\notag\\
&=&\sum\limits_{i\in\mathbb{Z}}\left(\left[a^{\ast}_{i}\otimes a_{-i},1\otimes(a^{\ast}_{m}a_{n}+a^{\prime\ast}_{p}a^{\prime}_{q})\right]
+\left[a^{\ast}_{i}\otimes a_{-i},(a^{\ast}_{m}a_{n}+a^{\prime\ast}_{p}a^{\prime}_{q})\otimes1\right]\right)\notag\\
&=&\sum\limits_{i\in\mathbb{Z}}\left(a^{\ast}_{i}\otimes \left[a_{-i},a^{\ast}_{m}a_{n}+a^{\prime\ast}_{p}a^{\prime}_{q}\right]+
\left[a^{\ast}_{i},a^{\ast}_{m}a_{n}+a^{\prime\ast}_{p}a^{\prime}_{q}\right]\otimes a_{-i}\right)\notag\\
&=&\sum\limits_{i\in\mathbb{Z}}a^{\ast}_{i}\otimes\left([a_{-i},a^{\ast}_{m}]a_{n}+
a^{\ast}_{m}[a_{-i},a_{n}]+[a_{-i},a^{\prime\ast}_{p}]a^{\prime}_{q}+
a^{\prime\ast}_{p}[a_{-i},a^{\prime}_{q}]\right)\notag\\
&&+\sum\limits_{i\in\mathbb{Z}}\left([a^{\ast}_{i},a^{\ast}_{m}]a_{n}
+a^{\ast}_{m}[a^{\ast}_{i},a_{n}]+a^{\prime\ast}_{p}[a^{\ast}_{i},a^{\prime}_{q}]
+[a^{\ast}_{i},a^{\prime\ast}_{p}]a^{\prime}_{q}\right)\otimes a_{-i}\notag\\
&=&\sum\limits_{i\in\mathbb{Z}}a^{\ast}_{i}\otimes \delta_{-i+m,0}a_{n}-\sum\limits_{i\in\mathbb{Z}}a^{\ast}_{m}\delta_{i+n,0}\otimes a_{-i}\notag\\
&=&0.
\end{eqnarray}
Similarly, $\big[\sum\limits_{j\in\mathbb{Z}}a^{\prime\ast}_{j}\otimes a^{\prime}_{-j},1\otimes(a^{\ast}_{m}a_{n}+a^{\prime\ast}_{p}a^{\prime}_{q})
+(a^{\ast}_{m}a_{n}+a^{\prime\ast}_{p}a^{\prime}_{q})\otimes1\big]=0$.
Furthermore, we have
\begin{align}
&\sum\limits_{i\in\mathbb{Z}}a^{\ast}_{i}\tau\otimes a_{-i}\tau\notag\\
&=
\sum\limits_{i\in\mathbb{Z}}a^{\ast}_{i}\exp\big(\sum\limits_{i\geq0\atop{j>0}}c_{ij}a^{\ast}_{-i}a_{-j}+
\sum\limits_{k\geq0\atop{l>0}}c^{\prime}_{kl}a^{\prime\ast}_{-k}a^{\prime}_{-l}\big)
|\text{vac}\rangle\otimes a_{-i}\exp\big(\sum\limits_{i\geq0\atop{j>0}}c_{ij}a^{\ast}_{-i}a_{-j}+
\sum\limits_{k\geq0\atop{l>0}}c^{\prime}_{kl}a^{\prime\ast}_{-k}a^{\prime}_{-l}\big)
|\text{vac}\rangle\notag\\
&=\exp\big(\sum\limits_{i\geq0\atop{j>0}}c_{ij}a^{\ast}_{-i}a_{-j}+
\sum\limits_{k\geq0\atop{l>0}}c^{\prime}_{kl}a^{\prime\ast}_{-k}a^{\prime}_{-l}\big)
\otimes\exp\big(\sum\limits_{i\geq0\atop{j>0}}c_{ij}a^{\ast}_{-i}a_{-j}+
\sum\limits_{k\geq0\atop{l>0}}c^{\prime}_{kl}a^{\prime\ast}_{-k}a^{\prime}_{-l}\big)
\sum\limits_{i\in\mathbb{Z}}a^{\ast}_{i}|\text{vac}\rangle\otimes a_{-i}|\text{vac}\rangle\notag\\
&=0.
\end{align}
Then, the identity (\ref{tau_com}) is a tau function of the the bosonic UC hierarchy.
\end{proof}

\section{Correlation functions of the bosonic universal character hierarchy}
In this section, we firstly construct a class of ordered exponential operators based on charged free bosons and deduce the relations among these operators. The tau functions of the bosonic UC hierarchy have been presented in view of these operators. Moreover, by means of UCs and  the vacuum expectation value, the correlation functions of the bosonic UC hierarchy have been obtained.

Introduce the ordered exponential operators $\mathbf{B}^{\ast}_{1}(x),\ \mathbf{B}^{\ast}_{2}(y),\ \mathbf{C}^{\ast}_{1}(x)$ and $\mathbf{C}^{\ast}_{2}(y)$  by charged free bosons
\begin{eqnarray}
&&\mathbf{B}^{\ast}_{1}(x)=\exp(xa_{-M_{1}}a^{\ast}_{M_{1}-1})
\exp(xa_{-M_{1}+1}a^{\ast}_{M_{1}-2})\cdots=\rprod\limits
_{i\in(-\infty,M_{1}]}\exp(xa_{-i}a^{\ast}_{i-1}),\notag\\
&&\mathbf{B}^{\ast}_{2}(y)=\exp(ya^{\prime}_{-M_{2}}a^{\prime\ast}_{M_{2}-1})
\exp(ya^{\prime}_{-M_{2}+1}a^{\prime\ast}_{M_{2}-2})\cdots=\rprod\limits
_{i\in(-\infty,M_{2}]}\exp(ya^{\prime}_{-i}a^{\prime\ast}_{i-1}),\notag\\
&&\mathbf{C}^{\ast}_{1}(x)=\cdots\exp(xa_{-M_{1}+2}a^{\ast}_{M_{1}-1})
\exp(xa_{-M_{1}+1}a^{\ast}_{M_{1}})=\lprod\limits_{i\in(-\infty,M_{1}]}
\exp(xa_{-i+1}a^{\ast}_{i}),\notag\\
&&\mathbf{C}^{\ast}_{2}(y)=\cdots\exp(ya^{\prime}_{-M_{2}+2}a^{\prime\ast}_{M_{2}-1})
\exp(ya^{\prime}_{-M_{2}+1}a^{\prime\ast}_{M_{2}})=\lprod\limits_{i\in(-\infty,M_{2}]}
\exp(ya^{\prime}_{-i+1}a^{\prime\ast}_{i}),
\end{eqnarray}
where $M_{1},M_{2}\in\mathbb{N}$, and the product $\rprod\limits_{i\in[a,b]}$ (respectively $\lprod\limits_{i\in[a,b]}$) is the ordered product evaluated from left to right (respectively right to left), achieved by decreasing $i$ from $b$ to $a$.

\begin{proposition}\label{Bzong}
For $M_{1},M_{2}\in\mathbb{N}$, it follows that
\begin{eqnarray}
&&\mathbf{B}^{\ast}_{1}(x)=\exp\left(\sum\limits^{\infty}_{n=1}(-1)^{n+1}
\frac{x^{n}}{n}\sum\limits^{M_{1}}_{i=-\infty}a_{-i}a^{\ast}_{i-n}\right),\notag\\
&&\mathbf{B}^{\ast}_{2}(y)=\exp\left(\sum\limits^{\infty}_{n=1}(-1)^{n+1}
\frac{y^{n}}{n}\sum\limits^{M_{2}}_{i=-\infty}a^{\prime}_{-i}
a^{\prime\ast}_{i-n}\right),\notag\\
&&\mathbf{C}^{\ast}_{1}(x)=\exp\left(\sum\limits^{\infty}_{n=1}(-1)^{n+1}
\frac{x^{n}}{n}\sum\limits^{M_{1}}_{i=-\infty}a_{-i+n}a^{\ast}_{i}\right),\notag\\
&&\mathbf{C}^{\ast}_{2}(y)=\exp\left(\sum\limits^{\infty}_{n=1}(-1)^{n+1}
\frac{y^{n}}{n}\sum\limits^{M_{2}}_{i=-\infty}a^{\prime}_{-i+n}a^{\prime\ast}_{i}\right).
\end{eqnarray}
\end{proposition}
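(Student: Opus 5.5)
The plan is to reduce the statement to a computation with infinite matrices via the representation $\rho$ of $\mathfrak{\widehat{gl}}_{2\infty}$ from the previous proposition. I treat $\mathbf{B}^{\ast}_{1}(x)$ in detail. The other three operators follow by the same argument: use the primed bosons for $\mathbf{B}^{\ast}_{2}(y)$, and for $\mathbf{C}^{\ast}_{1}(x),\mathbf{C}^{\ast}_{2}(y)$ reverse the ordering and shift the indices $i\mapsto i-1$.

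Set $T_{i,n}=a_{-i}a^{\ast}_{i-n}$ for $n\geq1$. Since $n\neq0$, the normal ordering is trivial and $T_{i,n}=-\rho(\overline{E}_{i,i-n})$. The cocycle $\mu(E_{i,i-n},E_{k,k-m})$ vanishes whenever $n,m\geq1$, because it would require $i=k-m$ and $k=i-n$ simultaneously. Hence the Lie span of the $T_{i,n}$ is a copy of the strictly lower-shifted part of $\mathfrak{\overline{gl}}_{\infty}$ with no central term. From (\ref{boson_four_relation}) one gets explicitly
\[
[T_{i,n},T_{k,m}]=\delta_{k,i-n}T_{i,n+m}-\delta_{i,k-m}T_{k,n+m}.
\]
This is, up to the overall sign coming from $\rho$ (which reverses products, so I will keep track of it by working with $-E$), the bracket of the matrices $e_{i,n}:=E_{i,i-n}$.

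Consequently an ordered product of exponentials of the $T_{i,1}$ can be evaluated in the faithful matrix picture. Concretely, I would verify the identity first in the natural module (one boson of each type, the span of $\{a_{-j}\}$), and then transport it back. Because each $e_{i,1}$ squares to zero, $\exp(xe_{i,1})=1+xe_{i,1}$. The product of the factors in the order prescribed for $\mathbf{B}^{\ast}_{1}$, taking into account the order reversal from $\rho$, is arranged so that all cross products $e_{i,1}e_{i-1,1}$ appear in the vanishing order. The whole product should therefore collapse to $1+x\Lambda_{M_1}$, where $\Lambda_{M_1}=\sum_{i\leq M_1}e_{i,1}$ is the shift truncated at $M_1$. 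Taking the logarithm gives
\[
\log(1+x\Lambda_{M_1})=\sum_{n\geq1}(-1)^{n+1}\frac{x^{n}}{n}\Lambda_{M_1}^{n},\qquad \Lambda_{M_1}^{n}=\sum_{i\leq M_1}e_{i,n}.
\]
Here the chain $i,i-1,\dots,i-n$ stays below $M_1$ precisely when $i\leq M_1$. Transporting back through $\rho$ and using the absence of a central term then yields
\[
\sum_{n\geq1}(-1)^{n+1}\frac{x^n}{n}\sum_{i\leq M_1}a_{-i}a^{\ast}_{i-n},
\]
which is the claimed exponent.

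As a cross-check, and as an alternative route that is closer to the preliminaries, one can merge the factors successively with the BCH formula. Only the nested commutators $[P^{(k)},Q]$ of consecutive $T_{i,1}$ survive, and each one produces a $T_{i,k+1}$. Collecting the coefficient of $x^{l+1}T_{i,l+1}$ then amounts to the identity $\sum_{i}\alpha_i\mathcal{K}^i_l=\frac{1}{l+1}$ of the Lemma, up to the sign $(-1)^l$.

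The main obstacle is rigor and bookkeeping rather than the algebra. Three points need care. First, one must check that the chosen left-to-right order, combined with the sign and order reversal of $\rho(E)=-a a^{\ast}$, really gives $1+x\Lambda$ and not $(1-x\Lambda)^{-1}$; the latter would change the signs $(-1)^{n+1}$, so this is the first thing I would test on two factors. Second, the infinite ordered product must be justified. I would truncate to $i\in[-N,M_1]$, prove the identity exactly there, since everything is nilpotent and all series are finite, and then let $N\to\infty$ using that on each vector of $\mathcal{M}$ only finitely many factors act nontrivially. Third, one must argue that an identity holding in the natural module lifts to the Fock representation. This holds because both sides are exponentials of elements of a Lie algebra acting by derivations, i.e. bilinears, and BCH is a universal Lie identity.
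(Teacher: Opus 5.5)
Your matrix route is sound and differs from the paper's. However, the one explicit formula you give has the wrong sign, and that sign is exactly what your ``first point'' depends on. From (\ref{boson_four_relation}), with $T_{i,n}=a_{-i}a^{\ast}_{i-n}$,
\[
[T_{i,n},T_{k,m}]=\delta_{k,i+m}T_{k,n+m}-\delta_{k,i-n}T_{i,n+m},\qquad\text{e.g. }[T_{i,1},T_{i-1,1}]=-T_{i,2}.
\]
What you displayed is instead the bracket of the matrices $e_{i,n}=E_{i,i-n}$ themselves.

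Taken literally, your formula makes $e_{i,n}\mapsto T_{i,n}$ a homomorphism. Then nothing is reversed, and the product in the prescribed order is $\prod(1+xe_{i,1})=(1-x\Lambda_{M_1})^{-1}$, because in that order $e_{i,1}e_{i-1,1}=e_{i,2}\neq0$. You would then obtain $\sum_n\frac{x^n}{n}$ with no alternating sign. Note also that $\rho$ is a homomorphism, so no reversal comes ``from $\rho$''. It comes from $T_{i,n}=\rho(-e_{i,n})$, which makes $e\mapsto T$ an anti-homomorphism; the corrected bracket confirms this.

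The cleanest fix is to make your natural module explicit. Let $T_{i,n}$ act by $\mathrm{ad}$ on $\mathrm{span}\{a^{\ast}_{j}\}$. Since $[T_{i,n},a^{\ast}_{j}]=\delta_{ij}a^{\ast}_{i-n}$, this is the faithful homomorphism $T_{i,n}\mapsto E_{i-n,i}$. In the prescribed order (decreasing $i$), all cross products of the factors $1+xE_{i-1,i}$ vanish. So the product is literally $1+x\Lambda_{M_1}^{T}$, and
\[
\log(1+x\Lambda_{M_1}^{T})=\sum_n(-1)^{n+1}\frac{x^n}{n}\sum_{i\le M_1}E_{i-n,i}
\]
pulls back to the claimed exponent. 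On $\mathrm{span}\{a_{-j}\}$ you get instead $\mathrm{ad}\,T_{i,n}=-E_{i,i-n}$ and the product $(1+x\Lambda_{M_1})^{-1}$, with the same conclusion.

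A smaller correction: not everything is nilpotent on $\mathcal{M}$. The operator $T_{1,1}=a_{-1}a^{\ast}_{0}$ is a product of two creation operators, so $\exp(xT_{1,1})$ maps into $\widetilde{\mathcal{M}}$. The truncated Lie algebra is nilpotent, so BCH is a finite Lie polynomial, but the identity should be read as one of formal power series in $x$.

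With that repaired, here is how your route compares with the paper's. The paper proves the truncated formula by induction on the lower cutoff $m$. It merges one factor $Q=xa_{-k}a^{\ast}_{k-1}$ at a time into the accumulated exponent $P$ via BCH and computes $[P^{(l)},Q]$ explicitly. It then needs the coefficient identity $\sum_i\alpha_i\mathcal{K}^i_l=\frac{1}{l+1}$ of the Lemma, and tacitly that every BCH term containing $Q$ twice vanishes.

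Your route replaces all of this with the telescoping of a product of unipotent matrices and the Mercator series for $\log(1+x\Lambda)$. It also makes the truncated range $j\in[n+m-1,M_1]$ transparent, as chains of length $n$ inside $[m,M_1]$. It handles $\mathbf{C}^{\ast}_{1},\mathbf{C}^{\ast}_{2}$ identically with the opposite shift. The price is the faithfulness/universality step: an injective Lie homomorphism on a nilpotent algebra, plus universality of BCH. The paper avoids this by working directly with bilinears in the Fock space. Your BCH ``cross-check'' is essentially the paper's proof.
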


\begin{proof}
For a finite interval $i\in[m,M_{1}]$, the ordered exponential operator $\mathbf{B}^{\ast}_{1}(x)$ is expressed as
\begin{eqnarray}\label{B_1_m_M}
\mathbf{B}^{\ast}_{1}(x)=\exp(xa_{-M_{1}}a^{\ast}_{M_{1}-1})
\cdots\exp(xa_{-m}a^{\ast}_{m-1})=\rprod\limits
_{i\in[m,M_{1}]}\exp(xa_{-i}a^{\ast}_{i-1}).
\end{eqnarray}
By induction on $m$, we prove
\begin{eqnarray}\label{B_induction}
\mathbf{B}^{\ast}_{1}(x)=\exp\left(\sum\limits^{M_{1}-m+1}_{n=1}(-1)^{n+1}
\frac{x^{n}}{n}\sum\limits^{M_{1}}_{j=n+m-1}a_{-j}a^{\ast}_{j-n}\right).
\end{eqnarray}
When $m=M_{1}$, the Eqs.(\ref{B_1_m_M}) and (\ref{B_induction})  are clearly equal. Suppose the Eq.(\ref{B_induction}) holds for $m=k+1$. It is convenient to denote
\begin{eqnarray}
&&P=\sum\limits^{M_{1}-k}_{n=1}(-1)^{n+1}
\frac{x^{n}}{n}\sum\limits^{M_{1}}_{j=n+k}a_{-j}a^{\ast}_{j-n},\\
&&Q=xa_{-k}a^{\ast}_{k-1}.
\end{eqnarray}

For $m=k$, the Eq.(\ref{B_1_m_M}) is given by $\mathbf{B}^{\ast}_{1}(x)=\exp(P)\exp(Q)$. According to the Eq.(\ref{boson_four_relation}), the higher-order nested commutators $[P^{(l)},Q]$ are
\begin{eqnarray}
&&[P^{(l)},Q]=\sum\limits^{M_{1}-k-l}_{j=0}(-1)^{j+l}\mathcal{K}^{l}_{l+j}x^{l+j+1}
a_{-k-l-j}a^{\ast}_{k-1},\quad 1\leq l\leq M_{1}-k.
\end{eqnarray}
It follows from the BCH formula that
\begin{eqnarray}
R&=&P+Q+\sum\limits^{M_{1}-k}_{l=1}\alpha_{l}\sum\limits^{M_{1}-k-l}_{j=0}(-1)^{j+l}
\mathcal{K}^{l}_{l+j}x^{l+j+1}a_{-k-l-j}a^{\ast}_{k-1}\notag\\
&=&P+Q+\sum\limits^{M_{1}-k}_{s=1}\sum\limits^{s}_{l=1}\alpha_{l}(-1)^{s}
\mathcal{K}^{l}_{s}x^{s+1}a_{-k-s}a^{\ast}_{k-1}\notag\\
&=&\sum\limits^{M_{1}-k}_{n=1}(-1)^{n+1}
\frac{x^{n}}{n}\sum\limits^{M_{1}}_{j=n+k}a_{-j}a^{\ast}_{j-n}+xa_{-k}a^{\ast}_{k-1}
+\sum\limits^{M_{1}-k}_{s=1}(-1)^{s}\frac{x^{s+1}}{s+1}a_{-k-s}a^{\ast}_{k-1}\notag\\
&=&\sum\limits^{M_{1}-k+1}_{n=1}(-1)^{n+1}\frac{x^{n}}{n}\sum\limits^{M_{1}}_{j=n+k-1}
a_{-j}a^{\ast}_{j-n}.
\end{eqnarray}
This confirms the inductive hypothesis holds at $m=k$. In the limit $m\rightarrow-\infty$, the finite sum $\sum\limits^{M_{1}}_{j=n+k-1}
a_{-j}a^{\ast}_{j-n}$ converges to the infinite series $\sum\limits^{M_{1}}_{j=-\infty}
a_{-j}a^{\ast}_{j-n}$. Other expressions in the proposition can be established in a similar manner, and their derivations are omitted for brevity.
\end{proof}

\begin{proposition}
For $M_{1},M_{2}\in\mathbb{N}$, it is apparent that
\begin{eqnarray}
&&[\mathbf{B}^{\ast}_{i}(x),\mathbf{B}^{\ast}_{j}(y)]
=[\mathbf{C}^{\ast}_{i}(x),\mathbf{C}^{\ast}_{j}(y)]=0,\quad i,j=1,2.
\end{eqnarray}
\end{proposition}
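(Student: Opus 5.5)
The plan is to work from the exponential forms in Proposition \ref{Bzong} rather than from the ordered products. Write $\mathbf{B}^{\ast}_{1}(x)=\exp(H_{1}(x))$ with
\begin{eqnarray*}
H_{1}(x)=\sum\limits_{n\geq1}(-1)^{n+1}\frac{x^{n}}{n}\mathcal{H}_{n},\qquad \mathcal{H}_{n}=\sum\limits_{i\leq M_{1}}a_{-i}a^{\ast}_{i-n}.
\end{eqnarray*}
Define $H_{2}(y)$ in the same way, using the primed bosons $a^{\prime}$, $a^{\prime\ast}$ and the cutoff $M_{2}$. Then $\mathbf{C}^{\ast}_{1}$ and $\mathbf{C}^{\ast}_{2}$ are the exponentials of the analogous sums built from $\mathcal{G}_{n}=\sum_{i\leq M_{1}}a_{-i+n}a^{\ast}_{i}$ (and its primed version). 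It is enough to show that the exponents commute. Exponents that commute have commuting exponentials, at least termwise, as formal power series in $x$ and $y$.

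The mixed cases $i\neq j$ are immediate. Every unprimed boson commutes with every primed boson by (\ref{boson_commutation}), so $[H_{1}(x),H_{2}(y)]=0$, and likewise for the $\mathbf{C}^{\ast}$ exponents.

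The main case is $i=j$, for example $[\mathbf{B}^{\ast}_{1}(x),\mathbf{B}^{\ast}_{1}(y)]$. Here I would show $[\mathcal{H}_{n},\mathcal{H}_{m}]=0$ for all $n,m\geq1$. Using (\ref{boson_four_relation}), $[a_{-i}a^{\ast}_{i-n},a_{-k}a^{\ast}_{k-m}]=\delta_{i-k+m,0}\,a_{-k}a^{\ast}_{i-n}-\delta_{i-n-k,0}\,a_{-i}a^{\ast}_{k-m}$. Summing over $i,k\leq M_{1}$ gives
\begin{eqnarray*}
[\mathcal{H}_{n},\mathcal{H}_{m}]=\sum\limits_{k\leq M_{1}}a_{-k}a^{\ast}_{k-m-n}\,\theta(k-m\leq M_{1})-\sum\limits_{i\leq M_{1}}a_{-i}a^{\ast}_{i-n-m}\,\theta(i-n\leq M_{1}).
\end{eqnarray*}
Since $m,n\geq1$, both indicator conditions hold automatically once $k\leq M_{1}$ (respectively $i\leq M_{1}$). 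The two sums are therefore the same series and cancel. The lower-triangular shape, meaning the shift by a positive $n$, is what makes the upper cutoff $M_{1}$ harmless. The computation for $\mathcal{G}_{n}$ is identical: $[a_{-i+n}a^{\ast}_{i},a_{-k+m}a^{\ast}_{k}]$ produces constraints of the form $k=i+m$ and $i=k+n$, and the cutoff is preserved in the same way.

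The main obstacle is justifying these manipulations for infinite sums, which run to $-\infty$. No central term of the type $\theta(\cdot)$ appears, because the operators are not normally ordered and the commutator is bilinear in the generators. Still, I would argue at the level of a finite truncation first. For the finite products over $[m,M_{1}]$, equation (\ref{B_induction}) gives exponents $\sum_{j=n+m-1}^{M_{1}}a_{-j}a^{\ast}_{j-n}$. Commuting two of these leaves only boundary terms near the lower cutoff $m$, and each such term involves a factor $a_{-j}$ with $-j\geq -m+\text{const}$. On any fixed vector of $\mathcal{M}$, or termwise in $x$ and $y$, these boundary terms vanish as $m\to-\infty$, since $a_{p}|\text{vac}\rangle=0$ for $p\geq0$ and only finitely many modes act nontrivially. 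Passing to the limit, as is done at the end of the proof of Proposition \ref{Bzong}, then yields the commutativity.
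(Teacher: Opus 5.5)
Your proposal is correct and follows essentially the same route as the paper: pass to the exponential forms of Proposition~\ref{Bzong}, use (\ref{boson_four_relation}) to reduce $[\mathcal{H}_{n},\mathcal{H}_{m}]$ to two copies of $\sum_{i\leq M_{1}}a_{-i}a^{\ast}_{i-n-m}$ that cancel, and handle the remaining cases analogously (the mixed cases are trivial because primed and unprimed bosons commute). Your truncation paragraph adds care that the paper omits, but your claim of boundary terms is inaccurate: the truncated exponents in (\ref{B_induction}) already commute exactly for every lower cutoff, since the same reindexing leaves no boundary terms, so the limiting step is simpler than you describe.
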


\begin{proof}
From
\begin{eqnarray}
&&\left[\sum\limits^{\infty}_{n=1}(-1)^{n+1}\frac{x^{n}}{n}\sum\limits^{M_{1}}
_{k=-\infty}a_{-k}a^{\ast}_{k-n},\sum\limits^{\infty}_{m=1}(-1)^{m+1}
\frac{y^{m}}{m}\sum\limits^{M_{1}}
_{l=-\infty}a_{-l}a^{\ast}_{l-m}\right]\notag\\
&=&\sum\limits^{\infty}_{n=1}(-1)^{n+1}\frac{x^{n}}{n}\sum\limits^{\infty}_{m=1}(-1)^{m+1}
\frac{y^{m}}{m}\sum\limits^{M_{1}}_{k=-\infty}\sum\limits^{M_{1}}_{l=-\infty}
\left(a_{-l}a^{\ast}_{k-n}\delta_{-k+l-m,0}-a_{-k}a^{\ast}_{l-m}\delta_{k-n-l,0}\right)\notag\\
&=&\sum\limits^{\infty}_{n=1}(-1)^{n+1}\frac{x^{n}}{n}\sum\limits^{\infty}_{m=1}(-1)^{m+1}
\frac{y^{m}}{m}\left(\sum\limits^{M_{1}}_{l=-\infty}a_{-l}a^{\ast}_{l-m-n}-\sum\limits^{M_{1}}
_{k=-\infty}a_{-k}a^{\ast}_{k-m-n}\right)\notag\\
&=&0,
\end{eqnarray}
we deduce that the identity $[\mathbf{B}^{\ast}_{1}(x),\mathbf{B}^{\ast}_{1}(y)]=0$ holds. Analogous arguments apply to the remaining expressions, whose proofs are omitted.
\end{proof}

\begin{proposition}
For $\forall N_{1}, N_{2}\in \mathbb{Z}_{+}$, the function
\begin{eqnarray}
&&\tau=\mathbf{B}^{\ast}_{2}(y_{1})\mathbf{B}^{\ast}_{2}(y_{2})\cdots
\mathbf{B}^{\ast}_{2}(y_{N_{2}})\mathbf{B}^{\ast}_{1}(x_{1})
\mathbf{B}^{\ast}_{1}(x_{2})\cdots\mathbf{B}^{\ast}_{1}(x_{N_{1}})|\text{vac}\rangle \in \mathcal{\widetilde{M}}
\end{eqnarray}
is a tau function of the the bosonic UC hierarchy.
\end{proposition}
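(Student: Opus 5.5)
We follow the same strategy as in the proof that (\ref{tau_com}) is a tau function. The aim is to show that each factor $\mathbf{B}^{\ast}_{1}(x_k)$ and $\mathbf{B}^{\ast}_{2}(y_l)$, acting diagonally as $g\otimes g$, commutes with both bilinear operators $\Omega=\sum_{i\in\mathbb{Z}}a^{\ast}_{i}\otimes a_{-i}$ and $\Omega^{\prime}=\sum_{j\in\mathbb{Z}}a^{\prime\ast}_{j}\otimes a^{\prime}_{-j}$. Once this is established, we can write
\begin{eqnarray*}
\Omega(\tau\otimes\tau)=(g\otimes g)\,\Omega(|\text{vac}\rangle\otimes|\text{vac}\rangle)=0,
\end{eqnarray*}
where $g=\mathbf{B}^{\ast}_{2}(y_{1})\cdots\mathbf{B}^{\ast}_{1}(x_{N_{1}})$. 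The last equality holds by Proposition \ref{vac_solution}, and the same argument applies to $\Omega^{\prime}$.

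\textbf{Step 1 (infinitesimal invariance).} We use Proposition \ref{Bzong} to write
\begin{eqnarray*}
\mathbf{B}^{\ast}_{1}(x)=\exp(X),\qquad X=\sum_{n\geq1}(-1)^{n+1}\frac{x^{n}}{n}\sum_{i\leq M_{1}}a_{-i}a^{\ast}_{i-n},
\end{eqnarray*}
and similarly $\mathbf{B}^{\ast}_{2}(y)=\exp(X^{\prime})$. The computation (\ref{solution_jiaohuan}) already shows that $[\Omega,\,1\otimes a_{s}a^{\ast}_{t}+a_{s}a^{\ast}_{t}\otimes1]=0$ for every bilinear $a_{s}a^{\ast}_{t}$. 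Indeed, using $[a_{-i},a^{\ast}_{t}]=\delta_{i,t}$ and $[a^{\ast}_{i},a_{s}]=-\delta_{i+s,0}$, the two contributions are $a^{\ast}_{t}\otimes a_{s}$ and $-a^{\ast}_{t}\otimes a_{s}$, which cancel. The identity $[\Omega^{\prime},\,\cdot\,]=0$ is trivial, since the unprimed and primed bosons commute. Summing over the terms of $X$ gives $[\Omega,X\otimes1+1\otimes X]=0$, and the same holds for $X^{\prime}$ and $\Omega^{\prime}$.

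\textbf{Step 2 (exponentiation).} The operators $X\otimes1$ and $1\otimes X$ commute, so $\exp(X)\otimes\exp(X)=\exp(X\otimes1+1\otimes X)$. It follows that $\Omega$ commutes with $\mathbf{B}^{\ast}_{1}(x)\otimes\mathbf{B}^{\ast}_{1}(x)$, by expanding the exponential series and applying Step 1 term by term. Iterating over all the factors of $g$ then gives $\Omega(g\otimes g)=(g\otimes g)\Omega$. The vacuum computation from Proposition \ref{vac_solution} finishes the proof. Membership $\tau\in\mathcal{\widetilde{M}}$ follows from normal-ordering considerations. Every term $a_{-i}a^{\ast}_{i-n}$ with $i\leq 0$ annihilates $|\text{vac}\rangle$ or moves past creation operators, so the product acting on the vacuum yields a formal power series in creation operators.

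\textbf{Main obstacle.} The delicate point is justifying these manipulations for operators that are infinite sums ($i\to-\infty$ and $n\to\infty$) acting on a completion, with $\Omega$ itself an infinite sum over $i$. To handle this, I would first work with the finite products $\rprod_{i\in[m,M_{1}]}\exp(xa_{-i}a^{\ast}_{i-1})$ as in (\ref{B_1_m_M}). There, Step 1 is applied factor by factor to the single bilinears $xa_{-i}a^{\ast}_{i-1}$, which avoids Proposition \ref{Bzong} entirely. I would then pass to the limit $m\to-\infty$. In the limit, any fixed matrix coefficient (any fixed monomial in the completed Fock space) stabilizes for $m$ sufficiently negative. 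This is because factors with $i\leq 0$ contain $a^{\ast}_{i-1}$, which annihilates the vacuum after commuting past only finitely many relevant operators. Consequently, the identity $\Omega(\tau_m\otimes\tau_m)=0$ for each finite $m$ passes to $\tau$ coefficientwise.
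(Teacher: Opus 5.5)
Your proposal is correct and follows the paper's proof: use (\ref{solution_jiaohuan}) to show that $\sum_{i}a^{\ast}_{i}\otimes a_{-i}$ (and its primed analogue) commutes with $g\otimes g$, then note that it annihilates $|\text{vac}\rangle\otimes|\text{vac}\rangle$; your truncation argument adds rigor that the paper skips. One slip there: for $i\leq 0$ it is $a_{-i}$, not $a^{\ast}_{i-1}$ (which is a creation operator), that kills $|\text{vac}\rangle$, and since $a_{-i}$ commutes with every mode actually present once $i\leq 1-N_{1}$ (resp.\ $1-N_{2}$ for the primed factors), those factors act as the identity, so in fact $\tau_{m}=\tau$ exactly for all $m\leq 1-\max(N_{1},N_{2})$.
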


\begin{proof}
By the Eq.(\ref{solution_jiaohuan}), it is simple to show that
\begin{eqnarray}
&&\left[\text{Res}_{z}a^{\ast}(z)\otimes a(z),\mathbf{B}^{\ast}_{2}(y)\mathbf{B}^{\ast}_{1}(x)\otimes\mathbf{B}^{\ast}_{2}(y)
\mathbf{B}^{\ast}_{1}(x)\right]=0,\notag\\
&&\left[\text{Res}_{z}a^{\prime\ast}(z)\otimes a^{\prime}(z),\mathbf{B}^{\ast}_{2}(y)\mathbf{B}^{\ast}_{1}(x)\otimes\mathbf{B}^{\ast}_{2}(y)
\mathbf{B}^{\ast}_{1}(x)\right]=0.
\end{eqnarray}
Furthermore, one can derive that
\begin{eqnarray}
&&\text{Res}_{z}a^{\ast}(z)\otimes a(z)(\tau\otimes\tau)\notag\\
&=&\text{Res}_{z}a^{\ast}(z)\otimes a(z)(\mathbf{B}^{\ast}_{2}(y_{1})\mathbf{B}^{\ast}_{2}(y_{2})\cdots
\mathbf{B}^{\ast}_{2}(y_{N_{2}})\mathbf{B}^{\ast}_{1}(x_{1})
\mathbf{B}^{\ast}_{1}(x_{2})\cdots\mathbf{B}^{\ast}_{1}(x_{N_{1}})|\text{vac}
\rangle\otimes\mathbf{B}^{\ast}_{2}(y_{1})\notag\\
&&\mathbf{B}^{\ast}_{2}(y_{2})\cdots
\mathbf{B}^{\ast}_{2}(y_{N_{2}})\mathbf{B}^{\ast}_{1}(x_{1})
\mathbf{B}^{\ast}_{1}(x_{2})\cdots\mathbf{B}^{\ast}_{1}(x_{N_{1}})|
\text{vac}\rangle)\notag\\
&=&\mathbf{B}^{\ast}_{2}(y_{1})\mathbf{B}^{\ast}_{2}(y_{2})\cdots
\mathbf{B}^{\ast}_{2}(y_{N_{2}})\mathbf{B}^{\ast}_{1}(x_{1})
\mathbf{B}^{\ast}_{1}(x_{2})\cdots\mathbf{B}^{\ast}_{1}(x_{N_{1}})\otimes
\mathbf{B}^{\ast}_{2}(y_{1})\mathbf{B}^{\ast}_{2}(y_{2})\cdots
\mathbf{B}^{\ast}_{2}(y_{N_{2}})\mathbf{B}^{\ast}_{1}(x_{1})\notag\\
&&\mathbf{B}^{\ast}_{1}(x_{2})\cdots\mathbf{B}^{\ast}_{1}(x_{N_{1}})
\left(\text{Res}_{z}a^{\ast}(z)|\text{vac}\rangle\otimes a(z)|\text{vac}\rangle\right)\notag\\
&=&0.
\end{eqnarray}
The same argument implies that the identity
\begin{eqnarray}
\text{Res}_{z}a^{\prime\ast}(z)\otimes a^{\prime}(z)(\tau\otimes\tau)=0
\end{eqnarray}
holds. Hence, $\tau=\mathbf{B}^{\ast}_{2}(y_{1})\mathbf{B}^{\ast}_{2}(y_{2})\cdots
\mathbf{B}^{\ast}_{2}(y_{N_{2}})\mathbf{B}^{\ast}_{1}(x_{1})
\mathbf{B}^{\ast}_{1}(x_{2})\cdots\mathbf{B}^{\ast}_{1}(x_{N_{1}})|\text{vac}\rangle$ is a tau function of  the bosonic UC hierarchy.
\end{proof}

\begin{theorem}
For $\forall M_{1}, M_{2},N_{1}, N_{2}\in\mathbb{Z}_{+}$, it can be inferred that
\begin{eqnarray}\label{B_1_N}
&&\mathbf{B}^{\ast}_{2}(y_{1})\mathbf{B}^{\ast}_{2}(y_{2})\cdots
\mathbf{B}^{\ast}_{2}(y_{N_{2}})\mathbf{B}^{\ast}_{1}(x_{1})
\mathbf{B}^{\ast}_{1}(x_{2})\cdots\mathbf{B}^{\ast}_{1}(x_{N_{1}})|\text{vac}\rangle\notag\\
&=&\exp\left(\sum\limits_{1\leq k\leq M_{2}\atop{1\leq n\leq N_{2}}}(-1)^{k-1}s_{(k,1^{n-1})}\{\mathbf{y}\}a^{\prime}_{-k}a^{\prime\ast}_{1-n}\right)
\exp\left(\sum\limits_{1\leq l\leq M_{1}\atop{1\leq n\leq N_{1}}}(-1)^{l-1}s_{(l,1^{n-1})}\{\mathbf{x}\}a_{-l}a^{\ast}_{1-n}\right)|\text{vac}\rangle,
\notag\\
\end{eqnarray}
where $\{\mathbf{x}\}=\{x_{1},x_{2},\ldots,x_{N_{1}}\}$ and $\{\mathbf{y}\}=\{y_{1},y_{2},\ldots,y_{N_{2}}\}$.
\end{theorem}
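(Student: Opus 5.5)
The plan is to reduce the statement to a computation with infinite lower-triangular Toeplitz matrices, using the exponential form of Proposition \ref{Bzong}. The primed and unprimed bosons commute, so the two sectors factor, and I only treat the $\mathbf{B}^{\ast}_{1}$ part. The $\mathbf{B}^{\ast}_{2}$ part is identical with $x\to y$, $M_1\to M_2$, $N_1\to N_2$.

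\emph{Step 1: combine the factors.} The operators $\mathbf{B}^{\ast}_{1}(x_j)$ commute with one another, and so do their exponents, by the computation in the proof of the preceding commutation proposition. Hence
\begin{eqnarray*}
\mathbf{B}^{\ast}_{1}(x_{1})\cdots\mathbf{B}^{\ast}_{1}(x_{N_{1}})=\exp\Big(\sum_{n\geq1}(-1)^{n+1}\frac{p_{n}(\mathbf{x})}{n}\sum_{i\leq M_{1}}a_{-i}a^{\ast}_{i-n}\Big),
\end{eqnarray*}
where $p_n(\mathbf{x})=\sum_j x_j^n$ are the power sums. The exponent is $\rho$ of a strictly lower-triangular matrix $A$ (up to sign convention), supported on rows $i\leq M_1$, whose $n$-th subdiagonal carries $(-1)^{n+1}p_n/n$. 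Its generating symbol is $\log\prod_j(1+x_jt)$. Consequently $e^{A}$ is, on rows $\leq M_1$, the Toeplitz matrix of $E(t)=\prod_j(1+x_jt)=\sum_k e_k(\mathbf{x})t^k$, with entries $e_n$ in position $(i,i-n)$. In particular all entries vanish for $n>N_1$.

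\emph{Step 2: Gauss factorisation on the vacuum.} Split the index set into $\{\leq0\}$ and $\{\geq1\}$. A bilinear $a_{-i}a^{\ast}_{i-n}$ with $i\leq0$ kills $|\text{vac}\rangle$, since $a_{-i}$ is an annihilator and commutes with $a^{\ast}_{i-n}$ because $n\neq0$. The same holds when $i-n\geq1$, since then $a^{\ast}_{i-n}$ annihilates. The only genuinely creating terms are those with $1\leq i\leq M_1$ and $i-n\leq0$.

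Since $\rho$ is a representation (the preceding Proposition) and $A$ is lower triangular, I would factor $e^{A}=e^{Y}\,g_{0}$. Here $Y$ is supported in the off-diagonal block (rows $\geq1$, columns $\leq0$), and $g_0$ is block lower-triangular with the $(\geq1,\leq0)$ block removed, so that $\rho(g_0)$ fixes $|\text{vac}\rangle$. This gives the standard formula
\[
Y=(e^{A})_{+-}\big((e^{A})_{--}\big)^{-1}.
\]
The central term causes no trouble: no diagonal entries are generated, so the cocycle contributes trivially and no scalar normalization arises.

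\emph{Step 3: identify the entries.} The block $(e^{A})_{--}$ is the lower-triangular Toeplitz matrix of $E(t)$. Its inverse is the Toeplitz matrix of $1/E(t)=\sum_r(-1)^rh_r(\mathbf{x})t^r$. Multiplying, the entry of $Y$ at row $l\in[1,M_1]$ and column $1-n$ (with $n\geq1$) is
\[
\sum_{r\geq0}(-1)^{r}\,e_{l+n-1-r}\,h_{r}.
\]
After reindexing and using the hook Jacobi--Trudi / Pieri identity $s_{(l,1^{n-1})}=\sum_{r\geq0}(-1)^r h_{l+r}e_{n-1-r}$, or its dual form obtained via $\omega$, this entry equals $(-1)^{l-1}s_{(l,1^{n-1})}(\mathbf{x})$. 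The ranges are then automatic. The bound $l\leq M_1$ comes from the row cutoff $i\leq M_1$. The bound $n\leq N_1$ holds because a hook with $n>N_1$ rows vanishes in $N_1$ variables.

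\emph{Main obstacle.} I expect the hardest step to be Step 3: getting the exact sign $(-1)^{l-1}$ and matching the transposed conventions. The signs from $\rho(E_{ij})=-:a_{-i}a^{\ast}_j:$ and from the inversion must combine so that the result is the hook $(l,1^{n-1})$ rather than its conjugate. I would check this first on $N_1=1$: there $\mathbf{B}^{\ast}_1(x)|\text{vac}\rangle$ should give $\exp(\sum_{l}(-1)^{l-1}x^{l}a_{-l}a^{\ast}_0)|\text{vac}\rangle$, and comparing both sides fixes the conventions.

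A secondary point is to justify the infinite Gauss factorisation in the completion $\widetilde{\mathcal{M}}$. This is harmless: everything is triangular and only finitely many terms contribute to each coefficient.
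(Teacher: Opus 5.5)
Your route is genuinely different from the paper's and can be made to work. As written, however, Step 3 does not produce the stated coefficient. The ``sign convention'' you leave open in Step 1 is exactly where it fails.

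\textbf{The error.} Off the diagonal $\rho(E_{ij})=-a_{-i}a^{\ast}_{j}$. So if $A$ is the matrix with symbol $\log E(t)$, the exponent is $\rho(-A)$, not $\rho(A)$. Since $\rho$ is a homomorphism, the element you must factor is $G=e^{-A}$. This is the Toeplitz matrix of $1/E(t)=\sum_r(-1)^rh_rt^r$ on rows $\le M_1$. You can also read it off directly: $\mathbf{B}^{\ast}_1(x)$ corresponds to the left-to-right product of the $I-xE_{i,i-1}$, which is $(I+xS)^{-1}$ with $S=\sum_{i\le M_1}E_{i,i-1}$.

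Suppose instead you keep $G$ equal to the Toeplitz matrix of $E(t)$ and use $Y=G_{+-}G_{--}^{-1}$. Then the intermediate index $1-n\le j\le 0$ truncates your sum to $0\le r\le n-1$; without that truncation the sum is identically zero, since $E(t)E(t)^{-1}=1$. The truncated sum gives
\[
\sum_{r=0}^{n-1}(-1)^{r}e_{l+n-1-r}h_{r}=(-1)^{n-1}s_{(n,1^{l-1})}(\mathbf{x}),
\]
which is the conjugate hook with the wrong sign. Already for $N_1=1$ it puts the weight on $a_{-1}a^{\ast}_{1-n}$ instead of on $a_{-l}a^{\ast}_{0}$. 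No reindexing or application of $\omega$ repairs this.

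\textbf{The fix.} With the correct $G$, set $Z=G_{+-}G_{--}^{-1}$. Then
\[
Z_{l,1-n}=\sum_{j=1-n}^{0}(-1)^{l-j}h_{l-j}\,e_{j-1+n}=(-1)^{l}\sum_{r=0}^{n-1}(-1)^{r}h_{l+r}e_{n-1-r}=(-1)^{l}s_{(l,1^{n-1})}(\mathbf{x}),
\]
and $\exp(\rho(Z))|\text{vac}\rangle=\exp\big(-\sum_{l,n}Z_{l,1-n}a_{-l}a^{\ast}_{1-n}\big)|\text{vac}\rangle$ is exactly the claimed vector. Note the extra minus sign coming from $\rho$.

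Alternatively, you may keep the Toeplitz matrix of $E(t)$ if you use $L\mapsto\sum L_{ij}a_{-i}a^{\ast}_{j}$. That map is an anti-homomorphism, so the factorization reverses to $G=g_0(I+Y)$ with $Y=(G_{++})^{-1}G_{+-}$. Your displayed sum, now truncated at $r\le l-1$, then does equal $(-1)^{l-1}s_{(l,1^{n-1})}$.

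In either version, drop the remark that the entries vanish for $n>N_1$. For the Toeplitz matrix of $1/E(t)$ they do not vanish, and that cutoff comes only from hooks with more than $N_1$ rows vanishing.

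\textbf{Comparison with the paper.} The paper works directly with the operators, by induction on the number of variables. The case $N_1=1$ is done by pushing the factors $\exp(xa_{-i}a^{\ast}_{i-1})$ through one at a time. In the inductive step, $\mathbf{B}^{\ast}_1(x_{N_1})$ is moved to the front. Its factors with $i\le0$ shift $a^{\ast}_{1-n}\mapsto a^{\ast}_{1-n}+x_{N_1}a^{\ast}_{-n}$, and its factors with $i\ge1$ are absorbed recursively into the $B_q$. The coefficients are then matched with the one-variable branching rule \eqref{schur-skew-schur}--\eqref{skew-schur}.

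Your argument treats all variables at once. It makes visible why exactly hook Schur functions occur: they are entries of the Gauss factor, via the hook Jacobi--Trudi identity. It also shows where the cutoffs $l\le M_1$ and $n\le N_1$ come from, and it applies verbatim to any product of such factors. The price is that you use $\rho$ beyond $\overline{\mathfrak{gl}}_{\infty}$, since your matrices are strictly lower-triangular but not banded. BCH and the Gauss factorization must therefore be read degree by degree in the $x_j$. This is legitimate, and, as you note, the cocycle vanishes on strictly lower-triangular matrices, so no scalar appears.
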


\begin{proof}
For $N_{1}=N_{2}=1$, according to the Eq.(\ref{boson annihilation relations}), we have
\begin{eqnarray}
&&\mathbf{B}^{\ast}_{2}(y_{1})\mathbf{B}^{\ast}_{1}(x_{1})|\text{vac}\rangle\notag\\
&=&\rprod\limits
_{i\in[1,M_{2}]}\exp(y_{1}a^{\prime}_{-i}a^{\prime\ast}_{i-1})\rprod\limits
_{j\in[1,M_{1}]}\exp(x_{1}a_{-j}a^{\ast}_{j-1})|\text{vac}\rangle\notag\\
&=&\exp(y_{1}a^{\prime}_{-M_{2}}a^{\prime\ast}_{M_{2}-1})
\exp(y_{1}a^{\prime}_{-M_{2}+1}a^{\prime\ast}_{M_{2}-2})\rprod\limits
_{i\in[1,M_{2}-2]}\exp(y_{1}a^{\prime}_{-i}a^{\prime\ast}_{i-1})\exp(x_{1}a_{-M_{1}}
a^{\ast}_{M_{1}-1})\notag\\
&&\exp(x_{1}a_{-M_{1}+1}a^{\ast}_{M_{1}-2})\rprod\limits
_{j\in[1,M_{1}-2]}\exp(x_{1}a_{-j}a^{\ast}_{j-1})|\text{vac}\rangle\notag\\
&=&\exp\left(y_{1}a^{\prime}_{-M_{2}+1}a^{\prime\ast}_{M_{2}-2}-y^{2}_{1}
a^{\prime}_{-M_{2}}a^{\prime\ast}_{M_{2}-2}\right)
\exp(y_{1}a^{\prime}_{-M_{2}}a^{\prime\ast}_{M_{2}-1})\rprod\limits
_{i\in[1,M_{2}-2]}\exp(y_{1}a^{\prime}_{-i}a^{\prime\ast}_{i-1})\notag\\
&&\exp\left(x_{1}a_{-M_{1}+1}a^{\ast}_{M_{1}-2}-x^{2}_{1}a_{-M_{1}}a^{\ast}_{M_{1}-2}\right)
\exp(x_{1}a_{-M_{1}}a^{\ast}_{M_{1}-1})\rprod\limits
_{j\in[1,M_{1}-2]}\exp(x_{1}a_{-j}a^{\ast}_{j-1})|\text{vac}\rangle\notag\\
&=&\exp\left(y_{1}a^{\prime}_{-M_{2}+1}a^{\prime\ast}_{M_{2}-2}-y^{2}_{1}
a^{\prime}_{-M_{2}}a^{\prime\ast}_{M_{2}-2}\right)\exp(y_{1}a^{\prime}_{-M_{2}+2}
a^{\prime\ast}_{M_{2}-3})\rprod\limits
_{i\in[1,M_{2}-3]}\exp(y_{1}a^{\prime}_{-i}a^{\prime\ast}_{i-1})\notag\\
&&\exp\left(x_{1}a_{-M_{1}+1}a^{\ast}_{M_{1}-2}-x^{2}_{1}a_{-M_{1}}a^{\ast}_{M_{1}-2}\right)
\exp(x_{1}a_{-M_{1}+2}a^{\ast}_{M_{1}-3})\rprod\limits
_{j\in[1,M_{1}-3]}\exp(x_{1}a_{-j}a^{\ast}_{j-1})|\text{vac}\rangle\notag\\
&=&\exp\left(y_{1}a^{\prime}_{-M_{2}+2}a^{\prime\ast}_{M_{2}-3}-y^{2}_{1}
a^{\prime}_{-M_{2}+1}a^{\prime\ast}_{M_{2}-3}+y^{3}_{1}a^{\prime}_{-M_{2}}
a^{\prime\ast}_{M_{2}-3}\right)\rprod\limits
_{i\in[1,M_{2}-3]}\exp(y_{1}a^{\prime}_{-i}a^{\prime\ast}_{i-1})\notag\\
&&\exp\left(x_{1}a_{-M_{1}+2}a^{\ast}_{M_{1}-3}-x^{2}_{1}a_{-M_{1}+1}a^{\ast}_{M_{1}-3}
+x^{2}_{1}a_{-M_{1}}a^{\ast}_{M_{1}-3}\right)\rprod\limits
_{j\in[1,M_{1}-3]}\exp(x_{1}a_{-j}a^{\ast}_{j-1})|\text{vac}\rangle\notag\\
&=&\cdots\notag\\
&=&\exp\left(\sum\limits_{1\leq k\leq M_{2}}(-1)^{k-1}s_{(k)}\{y_{1}\}a^{\prime}_{-k}a^{\prime\ast}_{0}\right)
\exp\left(\sum\limits_{1\leq l\leq M_{1}}(-1)^{l-1}s_{(l)}\{x_{1}\}a_{-l}a^{\ast}_{0}\right)|\text{vac}\rangle.
\end{eqnarray}
Assume that the Eq.(\ref{B_1_N}) holds for $N_{1}-1$ and $N_{2}-1$, that is
\begin{eqnarray}
&&\mathbf{B}^{\ast}_{2}(y_{1})\mathbf{B}^{\ast}_{2}(y_{2})\cdots
\mathbf{B}^{\ast}_{2}(y_{N_{2}-1})\mathbf{B}^{\ast}_{1}(x_{1})
\mathbf{B}^{\ast}_{1}(x_{2})\cdots\mathbf{B}^{\ast}_{1}(x_{N_{1}-1})|\text{vac}\rangle\notag\\
&=&\exp\left(\sum\limits_{1\leq k\leq M_{2}\atop{1\leq n\leq N_{2}-1}}(-1)^{k-1}s_{(k,1^{n-1})}\{\mathbf{\overline{y}}\}a^{\prime}_{-k}
a^{\prime\ast}_{1-n}\right)
\exp\left(\sum\limits_{1\leq l\leq M_{1}\atop{1\leq n\leq N_{1}-1}}(-1)^{l-1}s_{(l,1^{n-1})}\{\mathbf{\overline{x}}\}a_{-l}a^{\ast}_{1-n}\right)|\text{vac}\rangle,
\notag\\
\end{eqnarray}
where $\{\mathbf{\overline{x}}\}=\{x_1,\ldots,x_{N_{1}-1}\}$ and $\{\mathbf{\overline{y}}\}=\{y_1,\ldots,y_{N_{2}-1}\}$.

Further, it follows that
\begin{eqnarray}\label{BBBB_N}
&&\mathbf{B}^{\ast}_{2}(y_{1})\mathbf{B}^{\ast}_{2}(y_{2})\cdots
\mathbf{B}^{\ast}_{2}(y_{N_{2}})\mathbf{B}^{\ast}_{1}(x_{1})
\mathbf{B}^{\ast}_{1}(x_{2})\cdots\mathbf{B}^{\ast}_{1}(x_{N_{1}})|\text{vac}\rangle\notag\\
&=&\mathbf{B}^{\ast}_{2}(y_{N_{2}})\mathbf{B}^{\ast}_{1}(x_{N_{1}})
\mathbf{B}^{\ast}_{2}(y_{1})\cdots
\mathbf{B}^{\ast}_{2}(y_{N_{2}-1})\mathbf{B}^{\ast}_{1}(x_{1})
\cdots\mathbf{B}^{\ast}_{1}(x_{N_{1}-1})|\text{vac}\rangle\notag\\
&=&\rprod\limits
_{i\in[2-N_{2},M_{2}]}\exp(y_{N_{2}}a^{\prime}_{-i}a^{\prime\ast}_{i-1})
\exp\left(\sum\limits_{1\leq k\leq M_{2}\atop{1\leq n\leq N_{2}-1}}(-1)^{k-1}s_{(k,1^{n-1})}\{\mathbf{\overline{y}}\}a^{\prime}_{-k}
a^{\prime\ast}_{1-n}\right)\notag\\
&&\rprod\limits_{j\in[2-N_{1},M_{1}]}\exp(x_{N_{1}}a_{-j}a^{\ast}_{j-1})
\exp\left(\sum\limits_{1\leq l\leq M_{1}\atop{1\leq n\leq N_{1}-1}}(-1)^{l-1}s_{(l,1^{n-1})}\{\mathbf{\overline{x}}\}a_{-l}a^{\ast}_{1-n}\right)
|\text{vac}\rangle\notag\\
&=&\rprod\limits
_{i\in[3-N_{2},M_{2}]}\exp(y_{N_{2}}a^{\prime}_{-i}a^{\prime\ast}_{i-1})
\exp\left(\sum\limits_{1\leq k\leq M_{2}\atop{1\leq n\leq N_{2}-2}}(-1)^{k-1}s_{(k,1^{n-1})}\{\mathbf{\overline{y}}\}a^{\prime}_{-k}
a^{\prime\ast}_{1-n}\right)\notag\\
&&\exp\left(y_{N_{2}}a^{\prime}_{N_{2}-2}a^{\prime\ast}_{1-N_{2}}\right)
\exp\left(\sum\limits_{1\leq k\leq M_{2}}
(-1)^{k-1}s_{(k,1^{N_{2}-2})}\{\mathbf{\overline{y}}\}a^{\prime}_{-k}
a^{\prime\ast}_{2-N_{2}}\right)\notag\\
&&\rprod\limits_{j\in[3-N_{1},M_{1}]}
\exp(x_{N_{1}}a_{-j}a^{\ast}_{j-1})\exp\left(\sum\limits_{1\leq l\leq M_{1}\atop{1\leq n\leq N_{1}-2}}(-1)^{l-1}s_{(l,1^{n-1})}\{\mathbf{\overline{x}}\}a_{-l}a^{\ast}_{1-n}\right)\notag\\
&&\exp\left(x_{N_{1}}a_{N_{1}-2}a^{\ast}_{1-N_{1}}\right)\exp\left(\sum\limits_{1\leq l\leq M_{1}}(-1)^{l-1}s_{(l,1^{N_{1}-2})}\{\mathbf{\overline{x}}\}a_{-l}a^{\ast}_{2-N_{1}}\right)
|\text{vac}\rangle\notag\\
&=&\rprod\limits
_{i\in[3-N_{2},M_{2}]}\exp(y_{N_{2}}a^{\prime}_{-i}a^{\prime\ast}_{i-1})
\exp\left(\sum\limits_{1\leq k\leq M_{2}\atop{1\leq n\leq N_{2}-2}}(-1)^{k-1}s_{(k,1^{n-1})}\{\mathbf{\overline{y}}\}a^{\prime}_{-k}
a^{\prime\ast}_{1-n}\right)\notag\\
&&\exp\left(\sum\limits_{1\leq k\leq M_{2}}(-1)^{k-1}s_{(k,1^{N_{2}-2})}\{\mathbf{\overline{y}}\}a^{\prime}_{-k}\left(
a^{\prime\ast}_{2-N_{2}}+y_{N_{2}}a^{\prime\ast}_{1-N_{2}}\right)\right)
\exp\left(y_{N_{2}}a^{\prime}_{N_{2}-2}a^{\prime\ast}_{1-N_{2}}\right)\notag\\
&&\rprod\limits_{j\in[3-N_{1},M_{1}]}
\exp(x_{N_{1}}a_{-j}a^{\ast}_{j-1})\exp\left(\sum\limits_{1\leq l\leq M_{1}\atop{1\leq n\leq N_{1}-2}}(-1)^{l-1}s_{(l,1^{n-1})}\{\mathbf{\overline{x}}\}a_{-l}a^{\ast}_{1-n}\right)\notag\\
&&\exp\left(\sum\limits_{1\leq l\leq M_{1}}(-1)^{l-1}s_{(l,1^{N_{1}-2})}
\{\mathbf{\overline{x}}\}a_{-l}\left(a^{\ast}_{2-N_{1}}+x_{N_{1}}a^{\ast}_{1-N_{1}}\right)\right)
\exp\left(x_{N_{1}}a_{N_{1}-2}a^{\ast}_{1-N_{1}}\right)|\text{vac}\rangle\notag\\
&=&\rprod\limits
_{i\in[1,M_{2}]}\exp(y_{N_{2}}a^{\prime}_{-i}a^{\prime\ast}_{i-1})
\exp\left(\sum\limits_{1\leq k\leq M_{2}\atop{1\leq n\leq N_{2}-1}}(-1)^{k-1}s_{(k,1^{n-1})}\{\mathbf{\overline{y}}\}a^{\prime}_{-k}\left(
a^{\prime\ast}_{1-n}+y_{N_{2}}a^{\prime\ast}_{-n}\right)\right)\notag\\
&&\rprod\limits_{j\in[1,M_{1}]}\exp(x_{N_{1}}a_{-j}a^{\ast}_{j-1})
\exp\left(\sum\limits_{1\leq l\leq M_{1}\atop{1\leq n\leq N_{1}-1}}(-1)^{l-1}s_{(l,1^{n-1})}\{\mathbf{\overline{x}}\}a_{-l}\left(a^{\ast}_{1-n}
+x_{N_{1}}a^{\ast}_{-n}\right)\right)|\text{vac}\rangle.\notag\\
\end{eqnarray}
For simplicity, we introduce new notations $A^{\prime}=\sum\limits_{1\leq k\leq M_{2}}A^{\prime}_{k}$ and $A=\sum\limits_{1\leq l\leq M_{1}}A_{l}$, where
\begin{eqnarray}
&&A^{\prime}_{k}=\sum\limits_{1\leq n\leq N_{2}-1}(-1)^{k-1}s_{(k,1^{n-1})}\{\mathbf{\overline{y}}\}a^{\prime}_{-k}\left(
a^{\prime\ast}_{1-n}+y_{N_{2}}a^{\prime\ast}_{-n}\right),\notag\\
&&A_{l}=\sum\limits_{1\leq n\leq N_{1}-1}(-1)^{l-1}s_{(l,1^{n-1})}\{\mathbf{\overline{x}}\}a_{-l}\left(a^{\ast}_{1-n}
+x_{N_{1}}a^{\ast}_{-n}\right).
\end{eqnarray}
Denote $B^{\prime}_{1}=y_{N_{2}}a^{\prime}_{-1}a^{\prime\ast}_{0}+A^{\prime}_{1}$, $B_{1}=x_{N_{1}}a_{-1}a^{\ast}_{0}+A_{1}$,
$B^{\prime}_{p}=\left[y_{N_{2}}a^{\prime}_{-p}a^{\prime\ast}_{p-1},
B^{\prime}_{p-1}\right]+A^{\prime}_{p}\ (2\leq p\leq M_{2})$ and $B_{q}=[x_{N_{1}}a_{-q}a^{\ast}_{q-1},B_{q-1}]+A_{q-1}\ (2\leq q\leq M_{1})$. Owing to the Eqs.(\ref{schur-skew-schur}) and (\ref{skew-schur}), it can be checked that
\begin{eqnarray}
B^{\prime}_{1}&=&y_{N_{2}}a^{\prime}_{-1}a^{\prime\ast}_{0}+\sum\limits_{1\leq n\leq N_{2}-1}s_{(1,1^{n-1})}\{\mathbf{\overline{y}}\}a^{\prime}_{-1}\left(
a^{\prime\ast}_{1-n}+y_{N_{2}}a^{\prime\ast}_{-n}\right)\notag\\
&=&\left(y_{N_{2}}+s_{(1)}\{\mathbf{\overline{y}}\}\right)a^{\prime}_{-1}a^{\prime\ast}_{0}
+\sum\limits_{2\leq n\leq N_{2}-1}\left(s_{(1^{n})}\{\mathbf{\overline{y}}\}+s_{(1^{n-1})}
\{\mathbf{\overline{y}}\}y_{N_{2}}\right)a^{\prime}_{-1}a^{\prime\ast}_{1-n}\notag\\
&&+
s_{(1^{N_{2}-1})}\{\mathbf{\overline{y}}\}y_{N_{2}}a^{\prime}_{-1}a^{\prime\ast}_{1-N_{2}}\notag\\
&=&\sum\limits_{1\leq n\leq N_{2}}\left(s_{(1^{n})}\{\mathbf{\overline{y}}\}+s_{(1^{n-1})}
\{\mathbf{\overline{y}}\}y_{N_{2}}\right)a^{\prime}_{-1}a^{\prime\ast}_{1-n}\notag\\
&=&\sum\limits_{1\leq n\leq N_{2}}s_{(1,1^{n-1})}\{\mathbf{y}\}a^{\prime}_{-1}a^{\prime\ast}_{1-n}.
\end{eqnarray}
and
\begin{eqnarray}
B^{\prime}_{2}&=&\big[y_{N_{2}}a^{\prime}_{-2}a^{\prime\ast}_{1},\sum\limits_{1\leq n\leq N_{2}}\left(s_{(1^{n})}\{\mathbf{\overline{y}}\}+s_{(1^{n-1})}
\{\mathbf{\overline{y}}\}y_{N_{2}}\right)a^{\prime}_{-1}a^{\prime\ast}_{1-n}\big]\notag\\
&&-\sum\limits_{1\leq n\leq N_{2}-1}s_{(2,1^{n-1})}\{\mathbf{\overline{y}}\}a^{\prime}_{-2}\left(
a^{\prime\ast}_{1-n}+y_{N_{2}}a^{\prime\ast}_{-n}\right)\notag\\
&=&-\sum\limits_{1\leq n\leq N_{2}}\left(s_{(1^{n})}\{\mathbf{\overline{y}}\}y_{N_{2}}+
s_{(1^{n-1})}\{\mathbf{\overline{y}}\}y^{2}_{N_{2}}\right)a^{\prime}_{-2}a^{\prime\ast}_{1-n}
-\sum\limits_{1\leq n\leq N_{2}-1}s_{(2,1^{n-1})}\{\mathbf{\overline{y}}\}a^{\prime}_{-2}a^{\prime\ast}_{1-n}\notag\\
&&-\sum\limits_{1\leq n\leq N_{2}-1}s_{(2,1^{n-1})}\{\mathbf{\overline{y}}\}y_{N_{2}}
a^{\prime}_{-2}a^{\prime\ast}_{-n}\notag\\
&=&-\sum\limits_{1\leq n\leq N_{2}}\left(s_{(1^{n})}\{\mathbf{\overline{y}}\}y_{N_{2}}+
s_{(1^{n-1})}\{\mathbf{\overline{y}}\}y^{2}_{N_{2}}\right)a^{\prime}_{-2}a^{\prime\ast}_{1-n}
-\sum\limits_{1\leq n\leq N_{2}-1}s_{(2,1^{n-1})}\{\mathbf{\overline{y}}\}a^{\prime}_{-2}a^{\prime\ast}_{1-n}\notag\\
&&-\sum\limits_{2\leq n\leq N_{2}}s_{(2,1^{n-2})}\{\mathbf{\overline{y}}\}y_{N_{2}}
a^{\prime}_{-2}a^{\prime\ast}_{1-n}\notag\\
&=&-\left(s_{(0)}\{\mathbf{\overline{y}}\}y^{2}_{N_{2}}+
s_{(1)}\{\mathbf{\overline{y}}\}y_{N_{2}}+s_{(2)}\{\mathbf{\overline{y}}\}\right)
a^{\prime}_{-2}a^{\prime\ast}_{0}-\sum\limits_{2\leq n\leq N_{2}-1}\big(s_{(1^{n-1})}\{\mathbf{\overline{y}}\}y^{2}_{N_{2}}
+s_{(2,1^{n-2})}\{\mathbf{\overline{y}}\}
y_{N_{2}}\notag\\
&&+s_{(1^{n})}\{\mathbf{\overline{y}}\}y_{N_{2}}+s_{(2,1^{n-1})}
\{\mathbf{\overline{y}}\}\big)a^{\prime}_{-2}a^{\prime\ast}_{1-n}-
\big(s_{(2,1^{n-2})}\{\mathbf{\overline{y}}\}
y_{N_{2}}+s_{(1^{N_{2}-1})}\{\mathbf{\overline{y}}\}y^{2}_{N_{2}}\notag\\
&&+s_{(1^{N_{2}})}\{\mathbf{\overline{y}}\}y_{N_{2}}\big)
a^{\prime}_{-2}a^{\prime\ast}_{1-N_{2}}\notag\\
&=&-\sum\limits_{1\leq n\leq N_{2}}s_{(2,1^{n-1})}\{\mathbf{y}\}a^{\prime}_{-2}a^{\prime\ast}_{1-n}.
\end{eqnarray}
By the same method, one obtains
\begin{eqnarray}
&&B_{1}=\sum\limits_{1\leq n\leq N_{1}}s_{(1,1^{n-1})}\{\mathbf{x}\}a_{-1}a^{\ast}_{1-n},\notag\\
&&B_{2}=-\sum\limits_{1\leq n\leq N_{2}}s_{(2,1^{n-1})}\{\mathbf{x}\}a_{-2}a^{\ast}_{1-n}.
\end{eqnarray}
Proceeding similarly, it is straightforward to show that
\begin{eqnarray}
&&B^{\prime}_{p}=(-1)^{p-1}\sum\limits_{1\leq n\leq N_{2}}s_{(p,1^{n-1})}\{\mathbf{y}\}a^{\prime}_{-p}a^{\prime\ast}_{1-n},\quad 1\leq p\leq M_{2},\notag\\
&&B_{q}=(-1)^{q-1}\sum\limits_{1\leq n\leq N_{1}}s_{(q,1^{n-1})}\{\mathbf{x}\}a_{-q}a^{\ast}_{1-n},\quad 1\leq q\leq M_{1}.
\end{eqnarray}
Thus, the Eq. (\ref{BBBB_N}) simplifies to
\begin{eqnarray}
&&\rprod\limits_{i\in[1,M_{2}]}\exp(y_{N_{2}}a^{\prime}_{-i}a^{\prime\ast}_{i-1})
\exp\left(A^{\prime}\right)
\rprod\limits_{j\in[1,M_{1}]}\exp(x_{N_{1}}a_{-j}a^{\ast}_{j-1})\exp\left(A\right)
|\text{vac}\rangle\notag\\
&=&\rprod\limits_{i\in[2,M_{2}]}\exp(y_{N_{2}}a^{\prime}_{-i}a^{\prime\ast}_{i-1})
\exp\left(y_{N_{2}}a^{\prime}_{-1}a^{\prime\ast}_{0}\right)\exp\left(A^{\prime}_{1}
+A^{\prime}_{2}+\cdots+A^{\prime}_{M_{2}}\right)
\rprod\limits_{j\in[2,M_{1}]}\exp(x_{N_{1}}a_{-j}a^{\ast}_{j-1})\notag\\
&&\exp\left(x_{N_{1}}a_{-1}a^{\ast}_{0}\right)\exp\left(A_{1}
+A_{2}+\cdots+A_{M_{1}}\right)|\text{vac}\rangle\notag\\
&=&\rprod\limits_{i\in[2,M_{2}]}\exp(y_{N_{2}}a^{\prime}_{-i}a^{\prime\ast}_{i-1})
\exp\left(B^{\prime}_{1}\right)\exp\left(A^{\prime}_{2}+\cdots+A^{\prime}_{M_{2}}\right)
\rprod\limits_{j\in[2,M_{1}]}\exp(x_{N_{1}}a_{-j}a^{\ast}_{j-1})
\exp\left(B_{1}\right)\notag\\
&&\exp\left(A_{2}+\cdots+A_{M_{1}}\right)|\text{vac}\rangle\notag\\
&=&\rprod\limits_{i\in[3,M_{2}]}\exp(y_{N_{2}}a^{\prime}_{-i}a^{\prime\ast}_{i-1})
\exp\left(B^{\prime}_{1}+B^{\prime}_{2}+A^{\prime}_{3}+\cdots+A^{\prime}_{M_{2}}\right)
\rprod\limits_{j\in[3,M_{1}]}\exp(x_{N_{1}}a_{-j}a^{\ast}_{j-1})\notag\\
&&\exp\left(B_{1}+B_{2}+A_{3}+\cdots+A_{M_{1}}\right)|\text{vac}\rangle\notag\\
&=&\cdots\notag\\
&=&\exp\left(\sum\limits_{1\leq k\leq M_{2}}B^{\prime}_{k}\right)
\exp\left(\sum\limits_{1\leq l\leq M_{1}}B_{l}\right)|\text{vac}\rangle\notag\\
&=&\exp\left(\sum\limits_{1\leq k\leq M_{2}\atop{1\leq n\leq N_{2}}}(-1)^{k-1}s_{(k,1^{n-1})}\{\mathbf{y}\}a^{\prime}_{-k}a^{\prime\ast}_{1-n}\right)
\exp\left(\sum\limits_{1\leq l\leq M_{1}\atop{1\leq n\leq N_{1}}}(-1)^{l-1}s_{(l,1^{n-1})}\{\mathbf{x}\}a_{-l}a^{\ast}_{1-n}\right)|\text{vac}\rangle.
\notag\\
\end{eqnarray}
This completes the proof.
\end{proof}

\begin{corollary}
For $\forall M_{1}, M_{2}\in\mathbb{Z}_{+}$, it also can be derived that
\begin{align}
&\langle\text{vac}|\mathbf{C}^{\ast}_{1}(x_{N_{1}})\mathbf{C}^{\ast}_{1}(x_{N_{1}-1})
\cdots\cdots\mathbf{C}^{\ast}_{1}(x_{1})\mathbf{C}^{\ast}_{2}(y_{N_{2}})
\mathbf{C}^{\ast}_{2}(y_{N_{2}-1})\cdots\mathbf{C}^{\ast}_{2}(y_{1})\notag\\
&=\langle\text{vac}|\exp\left(\sum\limits_{1\leq l\leq M_{1}\atop{1\leq n\leq N_{1}}}(-1)^{l-1}s_{(l,1^{n-1})}\{\mathbf{x}\}a_{n-1}a^{\ast}_{l}\right)
\exp\left(\sum\limits_{1\leq k\leq M_{2}\atop{1\leq n\leq N_{2}}}(-1)^{k-1}s_{(k,1^{n-1})}\{\mathbf{y}\}a^{\prime}_{n-1}a^{\prime\ast}_{k}\right).
\end{align}
\end{corollary}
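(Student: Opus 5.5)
The plan is to prove the corollary as the dual of the preceding theorem. One route is to repeat the inductive argument with the roles of creation and annihilation modes exchanged. The cleaner route is to construct an anti-automorphism $\sigma$ of $\mathcal{A}$ that sends $\langle\text{vac}|$ to $|\text{vac}\rangle$ and maps the $\mathbf{C}^{\ast}$ operators onto $\mathbf{B}^{\ast}$-type operators. I would define $\sigma$ on generators by
\begin{eqnarray*}
\sigma(a_{n})=a^{\ast}_{-n},\quad \sigma(a^{\ast}_{n})=-a_{-n},\quad \sigma(a^{\prime}_{n})=a^{\prime\ast}_{-n},\quad \sigma(a^{\prime\ast}_{n})=-a^{\prime}_{-n},
\end{eqnarray*}
and extend it anti-multiplicatively.

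The first step is to check that $\sigma$ is well defined. For this it must preserve the relations (\ref{boson_commutation}), namely
\begin{eqnarray*}
\sigma([a_{m},a^{\ast}_{n}])=[\sigma(a^{\ast}_{n}),\sigma(a_{m})]=[-a_{-n},a^{\ast}_{-m}]=-\delta_{m+n,0},
\end{eqnarray*}
which by (\ref{boson_commutation}) forces one more sign adjustment. Concretely, I will tune the signs, for instance taking $\sigma(a^{\ast}_{n})=a_{-n}$ together with an overall factor in the exponent. The identity to aim for is
\begin{eqnarray*}
\sigma(a_{-i+1}a^{\ast}_{i})=\pm a_{-i}a^{\ast}_{i-1}.
\end{eqnarray*}
Next I check (\ref{boson annihilation relations}): the right annihilation conditions $\langle\text{vac}|a_{n}=0$ ($n\geq0$) and $\langle\text{vac}|a^{\ast}_{n+1}=0$ become $a^{\ast}_{-n}|\text{vac}\rangle=0$ and $a_{-n-1}|\text{vac}\rangle=0$. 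That is wrong: the vacuum kills $a_{m}$ with $m\ge0$, not $a_{-n-1}$. I therefore expect to need the index shift $a_{n}\mapsto a^{\ast}_{-n+1}$-type, or to replace $\sigma$ by the transpose pairing $\langle u|X|v\rangle=\langle v|X^{t}|u\rangle$ compatible with the bilinear form (\ref{double_product}).

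Because this bookkeeping is delicate, my committed fallback is a direct repetition of the proof of the theorem with everything acting to the left. The base case $N_{1}=N_{2}=1$ runs as follows. In $\langle\text{vac}|\mathbf{C}^{\ast}_{1}(x_{1})$, the factor of $\mathbf{C}^{\ast}_{1}(x)$ with index $i\leq0$ is $\exp(xa_{-i+1}a^{\ast}_{i})$. For $i\leq -1$ this contains $a_{-i+1}$ with $-i+1\geq2$, which kills $\langle\text{vac}|$ from the right after commuting through the factors to its left. The factor $i=0$ involves $a_{1}a^{\ast}_{0}$, and $\langle\text{vac}|a_{1}=0$. So only the factors $i\in[1,M_{1}]$ survive, and they act through $a^{\ast}_{i}$, $i\ge1$, and $a_{-i+1}$. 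I then push the factors leftward one at a time using $e^{X}Ye^{-X}=Y+[X,Y]$ (the series terminates by (\ref{boson_four_relation})). This is mirror-image to the computation in the theorem, and it produces $\langle\text{vac}|\exp\bigl(\sum_{l}(-1)^{l-1}x_{1}^{l}a_{0}a^{\ast}_{l}\bigr)$, i.e. $s_{(l)}\{x_{1}\}a_{0}a^{\ast}_{l}$.

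For the induction step I peel off the leftmost operator $\mathbf{C}^{\ast}_{1}(x_{N_{1}})$. By the preceding proposition the $\mathbf{C}^{\ast}$'s commute, so the ordering is immaterial, and the primed and unprimed sectors commute entirely. Acting on the inductive expression
\begin{eqnarray*}
\langle\text{vac}|\exp\Bigl(\sum(-1)^{l-1}s_{(l,1^{n-1})}\{\overline{\mathbf{x}}\}a_{n-1}a^{\ast}_{l}\Bigr),
\end{eqnarray*}
conjugation by $\exp(x_{N_{1}}a_{-i+1}a^{\ast}_{i})$ sends $a_{n-1}\mapsto a_{n-1}+x_{N_{1}}a_{n}$. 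This is the mirror of $a^{\ast}_{1-n}\mapsto a^{\ast}_{1-n}+x_{N_{1}}a^{\ast}_{-n}$ in the theorem. The coefficients then recombine via (\ref{schur-skew-schur})–(\ref{skew-schur}) in exactly the same way as $B_{q}$ and $B^{\prime}_{p}$, giving $s_{(l,1^{n-1})}\{\mathbf{x}\}$.

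The main obstacle is the sign and index bookkeeping. I must confirm that the hooks come out as $(-1)^{l-1}s_{(l,1^{n-1})}$ with $a_{n-1}a^{\ast}_{l}$ and not with shifted indices. I would verify this first on the case $N_{1}=1$, $M_{1}=2$ by hand before trusting the mirrored induction.
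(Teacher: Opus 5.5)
Your committed route, redoing the theorem's induction with everything acting on $\langle\text{vac}|$, is what the paper intends; it gives no separate proof of the corollary. However, your base case rests on a misreading of (\ref{boson annihilation relations}).

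The dual vacuum is killed by $a_{n}$ with $n<0$ and by $a^{\ast}_{k}$ with $k\leq0$. It is \emph{not} killed by $a_{1}$, nor by $a_{n}$ with $n\geq2$. Under your reading, $\langle\text{vac}|$ and $|\text{vac}\rangle$ would have the same annihilators, which contradicts $\langle\text{vac}|[a_{0},a^{\ast}_{0}]|\text{vac}\rangle=1$. Your own target would also collapse, since $\langle\text{vac}|a_{0}a^{\ast}_{l}$ would vanish.

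The truncation to $i\in[1,M_{1}]$ is still true, but for a different reason. For $i\leq0$ the factor $\exp(xa_{-i+1}a^{\ast}_{i})$ acts trivially on the bra because $\langle\text{vac}|a^{\ast}_{i}=0$ and $[a_{-i+1},a^{\ast}_{i}]=0$. Each factor with $i\geq2$ is then moved past the exponential $E$ accumulated to its left, using $Ee^{X}=e^{X}(e^{-X}Ee^{X})$. It is then discarded because $\langle\text{vac}|a_{-i+1}=0$. This is why only $a_{0}$ survives; for example, $\langle\text{vac}|e^{xa_{0}a^{\ast}_{1}}e^{xa_{-1}a^{\ast}_{2}}=\langle\text{vac}|\exp(xa_{0}a^{\ast}_{1}-x^{2}a_{0}a^{\ast}_{2})$.

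In the induction step, first use commutativity to place $\mathbf{C}^{\ast}_{1}(x_{N_{1}})$ to the right of the other factors. The surviving factors are then $i\in[2-N_{1},M_{1}]$, because the inductive exponential contains $a_{0},\dots,a_{N_{1}-2}$. Process the factors with $i\in[2-N_{1},0]$ from left to right. For each one, $\langle\text{vac}|e^{X}=\langle\text{vac}|$ and $e^{-X}a_{n-1}e^{X}=a_{n-1}+x_{N_{1}}a_{n}$, which is exactly the shift you anticipated. The remaining factors recombine as in the paper's $B_{q}$.

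The same misreading is the only reason you dropped the anti-automorphism, and that route is in fact the cleanest proof. Take your sign-corrected choice $\sigma(a_{n})=a^{\ast}_{-n}$, $\sigma(a^{\ast}_{n})=a_{-n}$, and likewise for the primed modes. Then $\sigma$ respects (\ref{boson_commutation}). It also maps the annihilators $a_{m},a^{\ast}_{m+1}$ $(m\geq0)$ of $|\text{vac}\rangle$ exactly onto the annihilators $a^{\ast}_{-m},a_{-m-1}$ of $\langle\text{vac}|$, so no index shift is needed. Since $\sigma(a_{-i+1}a^{\ast}_{i})=a_{-i}a^{\ast}_{i-1}$ and $\sigma$ reverses products, three things follow:
\begin{itemize}
\item $\sigma(\mathbf{C}^{\ast}_{j}(x))=\mathbf{B}^{\ast}_{j}(x)$;
\item the string $\mathbf{C}^{\ast}_{1}(x_{N_{1}})\cdots\mathbf{C}^{\ast}_{2}(y_{1})$ goes exactly to $\mathbf{B}^{\ast}_{2}(y_{1})\cdots\mathbf{B}^{\ast}_{1}(x_{N_{1}})$, which explains the reversed ordering in the statement;
\item $\sigma(a_{-l}a^{\ast}_{1-n})=a_{n-1}a^{\ast}_{l}$.
\end{itemize}
So the corollary is the image of the theorem under $X|\text{vac}\rangle\mapsto\langle\text{vac}|\sigma(X)$. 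This map is well defined because $\sigma$ carries the left annihilating ideal of $|\text{vac}\rangle$ onto the right annihilating ideal of $\langle\text{vac}|$. Equivalently, apply $\sigma$ line by line to the theorem's proof: operator identities go to operator identities, and discarding $e^{X}$ with $X|\text{vac}\rangle=0$ goes to discarding $e^{\sigma(X)}$ with $\langle\text{vac}|\sigma(X)=0$. No new computation is required.
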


\begin{theorem}
For any positive integers $M_{1},M_{2},N_{1}$ and $N_{2}$, the correlation function of the bosonic UC hierarchy is
\begin{eqnarray}
&&\langle\text{vac}|\mathbf{C}^{\ast}_{1}(x_{N_{1}})
\cdots\cdots\mathbf{C}^{\ast}_{1}(x_{1})\mathbf{C}^{\ast}_{2}(y_{N_{2}})
\cdots\mathbf{C}^{\ast}_{2}(y_{1})\mathbf{B}^{\ast}_{2}(v_{1})\cdots
\mathbf{B}^{\ast}_{2}(v_{N_{2}})\mathbf{B}^{\ast}_{1}(u_{1})
\cdots\mathbf{B}^{\ast}_{1}(u_{N_{1}})|\text{vac}\rangle\notag\\
&=&\frac{1}{\sum\limits_{\mu\subseteq[N_{1},M_{1}]\atop{\lambda\subseteq[N_{2},M_{2}]}}
\sum\limits_{\tau,\nu,\xi\in\mathcal{P}\atop{\widetilde{\tau},\widetilde{\nu},
\widetilde{\xi}\in\mathcal{P}}}
C^{\mu}_{\nu,\tau}C^{\lambda}_{\xi,\tau}C^{\mu}_{\widetilde{\nu},\widetilde{\tau}}
C^{\lambda}_{\widetilde{\xi},\widetilde{\tau}}
S_{[\nu,\xi]}\{\mathbf{x},\mathbf{y}\}
S_{[\widetilde{\nu},\widetilde{\xi}]}\{\mathbf{u},\mathbf{v}\}}.
\end{eqnarray}
\end{theorem}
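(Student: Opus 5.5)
Using the Theorem and the Corollary, I would write the correlation function as
\[
\langle\text{vac}|\,e^{X_1}e^{X_2}\,e^{Y_2}e^{Y_1}\,|\text{vac}\rangle ,
\]
where
\begin{align*}
X_1&=\sum_{l,n}(-1)^{l-1}s_{(l,1^{n-1})}\{\mathbf{x}\}a_{n-1}a^{\ast}_{l}, &
X_2&=\sum_{k,n}(-1)^{k-1}s_{(k,1^{n-1})}\{\mathbf{y}\}a^{\prime}_{n-1}a^{\prime\ast}_{k},\\
Y_1&=\sum_{l,n}(-1)^{l-1}s_{(l,1^{n-1})}\{\mathbf{u}\}a_{-l}a^{\ast}_{1-n}, &
Y_2&=\sum_{k,n}(-1)^{k-1}s_{(k,1^{n-1})}\{\mathbf{v}\}a^{\prime}_{-k}a^{\prime\ast}_{1-n},
\end{align*}
with $1\le l\le M_1$ and $1\le n\le N_1$ in the unprimed sums, and $1\le k\le M_2$ and $1\le n\le N_2$ in the primed ones. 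The primed and unprimed bosons commute, so by \eqref{double_product} the expectation factorizes as
\[
\langle e^{X_1}e^{Y_1}\rangle\,\langle e^{X_2}e^{Y_2}\rangle .
\]
Each factor has the shape of the bosonic KP correlation function treated by Jing et al.\ \cite{bosoncorreclation}.

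For one factor I would put the coefficients into an $N_1\times M_1$ matrix. On $\mathcal{M}$ the operator $Y_1$ acts like $\sum_{l,n}\beta_{nl}\,a_{-l}a^{\ast}_{1-n}$, a bilinear creation operator, and $X_1$ is the corresponding annihilation operator. Expanding both exponentials in the basis $|n\rangle$ and pairing them with $\langle m|n\rangle$ (which contributes factorials) turns $\langle e^{X_1}e^{Y_1}\rangle$ into a Gaussian-type sum. For bosons this sum is $\det(I-\alpha\beta^{T})^{-1}$, the bosonic counterpart of the fermionic $\det(I+\alpha\beta^{T})$. I would prove this either by a coherent-state or Gaussian-integral argument, or by a generating-function computation.

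Next I would identify the determinant. The hook Schur functions $s_{(l,1^{n-1})}$ with the signs $(-1)^{l-1}$ are exactly the matrix entries of $\mathbf{B}^{\ast}_1$ acting on one-particle states, so $I-\alpha\beta^{T}$ can be traced back to products of $(1-x_iu_j)$-type factors restricted to the $[N_1,M_1]$ window. The Cauchy identity truncated to $\mu\subseteq[N_1,M_1]$ then gives
\[
\det(I-\alpha\beta^{T})=\sum_{\mu\subseteq[N_1,M_1]} s_\mu\{\mathbf{x}\}\,s_\mu\{\mathbf{u}\}.
\]
Multiplying the two factors yields
\[
\frac{1}{\sum_{\mu,\lambda} s_\mu\{\mathbf{x}\}s_\lambda\{\mathbf{y}\}\,s_\mu\{\mathbf{u}\}s_\lambda\{\mathbf{v}\}} .
\]
Finally, I would apply the Littlewood--Richardson expansion $s_\mu\{\mathbf{x}\}s_\lambda\{\mathbf{y}\}=\sum C^{\mu}_{\nu\tau}C^{\lambda}_{\xi\tau}S_{[\nu,\xi]}$ from Section 2 to both pairs. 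This produces the stated quadruple sum of products of UCs.

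The hard step is the middle one: establishing the bosonic Gaussian formula and matching the resulting determinant with the truncated Schur sum. This is what produces the reciprocal, in contrast to the generalized phase model, where the same sum appears without inversion. I would first try to compute $\langle e^{X_1}e^{Y_1}\rangle$ directly through the fermionic analogue from \cite{correclation_GPM}, replacing the Cauchy--Binet expansion by a permanent or MacMahon master theorem expansion. I would also verify the determinant identity on small cases, $N_1=M_1=1$ and $N_1=M_1=2$, before attempting the general case.
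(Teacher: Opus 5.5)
Your outline follows the paper's route. You reduce to bilinear exponentials via the Theorem and Corollary, factor the primed and unprimed sectors, and evaluate each sector as an inverse determinant. You then identify that determinant with a truncated Cauchy sum and finish with Koike's Littlewood--Richardson expansion. A permanent/MacMahon evaluation is a legitimate replacement for the constant-term lemma of Jing et al.\ that the paper uses.

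\textbf{The sign in your central formula is wrong.} Not all pairs are ordinary bosons. The modes $a_{n-1}$ and $a^{\ast}_{1-n}$ satisfy $[a_{n-1},a^{\ast}_{1-n}]=1$. But the annihilator $a^{\ast}_{l}$ and the creator $a_{-l}$ ($l\geq1$) satisfy $[a^{\ast}_{l},a_{-l}]=-1$. This is the sign in the pairing $\langle m|n\rangle$ of Section 2, and the factor $(-1)^{\sum k_l}$ in the paper's computation. Contracting these modes gives $(-1)^{k_l}k_l!$ rather than $k_l!$. So each sector equals $\det(I+\alpha\beta^{T})^{-1}$, not $\det(I-\alpha\beta^{T})^{-1}$. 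Already for $N_1=M_1=1$ the sector is $\langle\text{vac}|e^{x_1a_{0}a^{\ast}_{1}}e^{u_1a_{-1}a^{\ast}_{0}}|\text{vac}\rangle=\sum_{k\geq0}(-x_1u_1)^{k}=1/(1+x_1u_1)$. This agrees with $\sum_{\mu\subseteq[1,1]}s_\mu\{x_1\}s_\mu\{u_1\}=1+x_1u_1$, whereas your identity would assert $1-x_1u_1=1+x_1u_1$. Your heuristic is that bosonic statistics turns the fermionic $\det(I+\cdot)$ into $\det(I-\cdot)^{-1}$. That flip is exactly cancelled by this indefinite metric, which is why the answer is the plain reciprocal of the generalized phase model's correlation function.

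\textbf{The identification step also needs a different mechanism.} The Cauchy kernel $\prod_{i,j}(1-x_iu_j)^{-1}$ gives the untruncated sum, and there is no operation that ``restricts'' it to the window $[N_1,M_1]$. What works is Cauchy--Binet:
\[
\det(I_{N_1}+\alpha\beta^{T})=\sum_{S,T}\det\alpha_{S,T}\det\beta_{S,T},
\]
where $S\subseteq\{1,\ldots,N_1\}$, $T\subseteq\{1,\ldots,M_1\}$ and $|S|=|T|$. The column signs $(-1)^{l-1}$ cancel between the two minors. By Giambelli's formula, the minor of $\bigl(s_{(l,1^{n-1})}\bigr)$ with rows $n_1>\cdots>n_r$ and columns $l_1>\cdots>l_r$ is $s_\lambda$, where $\lambda=(l_1-1,\ldots,l_r-1\,|\,n_1-1,\ldots,n_r-1)$ in Frobenius notation. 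Each $\lambda\subseteq[N_1,M_1]$ arises from exactly one pair $(S,T)$. This gives $\det(I+\alpha\beta^{T})=\sum_{\mu\subseteq[N_1,M_1]}s_\mu\{\mathbf{x}\}s_\mu\{\mathbf{u}\}$, a step the paper only asserts. With your minus sign the same computation gives $\sum_{\mu}(-1)^{r(\mu)}s_\mu\{\mathbf{x}\}s_\mu\{\mathbf{u}\}$, where $r(\mu)$ is the side of the Durfee square. That confirms the sign error. Once the sign is fixed, the rest of your plan goes through as in the paper.
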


\begin{proof}
Before proving the main theorem,  it is necessary to introduce the following auxiliary lemma.
\begin{lemma}\textsuperscript{\cite{bosoncorreclation}}
Let $M=(M_{ij})_{1\leq i,j\leq n}$ be an $n\times n$ matrix, and define $W_{k}=\sum\limits^{n}_{i=1}M_{ik}\frac{w_{k}}{w_{i}}$ for each $k\in\{1,2,\ldots,n\}$. Then the constant term $CT(F)$ of  the rational function $F=\prod\limits^{n}_{k=1}\frac{1}{W_{k}}$  with respect to the variables $w_{i}$ equals $\frac{1}{\det M}$.
\end{lemma}
Based on the above lemma, we now proceed to prove the main theorem. For any positive integers $M_{1}, M_{2},N_{1}, N_{2}$, denote
\begin{eqnarray}
&&X^{\prime}_{k}=\sum\limits_{1\leq n\leq N_{2}}s_{(k,1^{n-1})}\{\mathbf{y}\}a^{\prime}_{n-1},\quad Y^{\prime}_{k}=\sum\limits_{1\leq n\leq N_{2}}s_{(k,1^{n-1})}\{\mathbf{v}\}a^{\prime\ast}_{1-n}
,\quad 1\leq k\leq M_{2}\notag\\
&&X_{l}=\sum\limits_{1\leq n\leq N_{1}}s_{(l,1^{n-1})}\{\mathbf{x}\}a_{n-1},\quad \ Y_{l}=\sum\limits_{1\leq n\leq N_{1}}s_{(l,1^{n-1})}\{\mathbf{u}\}a^{\ast}_{1-n},\quad \ 1\leq l\leq M_{1}.
\end{eqnarray}
It follows from Eq.(\ref{boson_commutation}) that
\begin{eqnarray}
&&M_{ij}=[X_{i},Y_{j}]=\sum\limits_{1\leq n\leq N_{1}}s_{(i,1^{n-1})}\{\mathbf{x}\}s_{(j,1^{n-1})}\{\mathbf{u}\},\notag\\
&&M^{\prime}_{ij}=[X^{\prime}_{i},Y^{\prime}_{j}]=\sum\limits_{1\leq n\leq N_{2}}s_{(i,1^{n-1})}\{\mathbf{y}\}s_{(j,1^{n-1})}\{\mathbf{v}\}.
\end{eqnarray}
Associated with the matrices $M=(M_{ij})_{1\leq i,j\leq M_{1}}$ and $M^{\prime}=(M^{\prime}_{ij})_{1\leq i,j\leq M_{2}}$, introduce the functions
$W_{j}=\sum\limits^{M_{1}}_{i=1}\left(\delta_{ij}+M_{ij}\right)\frac{w_{j}}{w_{i}}(1\leq j\leq M_{1})$ and $W^{\prime}_{j^{\prime}}=\sum\limits^{M_{2}}_{i^{\prime}=1}
\left(\delta_{i^{\prime}j^{\prime}}+M^{\prime}_{i^{\prime}j^{\prime}}\right)
\frac{w^{\prime}_{j^{\prime}}}{w^{\prime}_{i^{\prime}}}(1\leq j^{\prime}\leq M_{2})$. It can be verified that the identity
\begin{eqnarray}
\det(I+M)=\sum\limits_{\nu\subseteq[N_{1},M_{1}]}s_{\nu}\{\mathbf{x}\}
s_{\nu}\{\mathbf{u}\}
\end{eqnarray}
holds.

By the Eqs.(\ref{boson_commutation}) and (\ref{double_product}), it is seen that
\begin{eqnarray}
&&\langle\text{vac}|\mathbf{C}^{\ast}_{1}(x_{N_{1}})
\cdots\cdots\mathbf{C}^{\ast}_{1}(x_{1})\mathbf{C}^{\ast}_{2}(y_{N_{2}})
\cdots\mathbf{C}^{\ast}_{2}(y_{1})\mathbf{B}^{\ast}_{2}(v_{1})\cdots
\mathbf{B}^{\ast}_{2}(v_{N_{2}})\mathbf{B}^{\ast}_{1}(u_{1})
\cdots\mathbf{B}^{\ast}_{1}(u_{N_{1}})|\text{vac}\rangle\notag\\
&=&\langle\text{vac}|\exp\left(\sum\limits_{1\leq l\leq M_{1}}(-1)^{l-1}X_{l}a^{\ast}_{l}\right)\exp\left(\sum\limits_{1\leq k\leq M_{2}}(-1)^{k-1}X^{\prime}_{k}a^{\prime\ast}_{k}\right)
\exp\left(\sum\limits_{1\leq k\leq M_{2}}(-1)^{k-1}Y^{\prime}_{k}a^{\prime}_{-k}\right)\notag\\
&&\exp\left(\sum\limits_{1\leq l\leq M_{1}}(-1)^{l-1}Y_{l}a_{-l}\right)
|\text{vac}\rangle\notag\\
&=&\langle\text{vac}|\sum\limits_{k_{l}\geq0}\prod\limits^{M_{1}}_{l=1}
\frac{\left((-1)^{l-1}X_{l}a^{\ast}_{l}\right)^{k_{l}}}{k_{l}!}
\sum\limits_{k_{p}\geq0}\prod\limits^{M_{2}}_{p=1}
\frac{\left((-1)^{p-1}X^{\prime}_{p}a^{\prime\ast}_{p}\right)^{k_{p}}}{k_{p}!}
\sum\limits_{m_{p}\geq0}\prod\limits^{M_{2}}_{p=1}
\frac{\left((-1)^{p-1}Y^{\prime}_{p}a^{\prime}_{-p}\right)^{m_{p}}}{m_{p}!}\notag\\
&&\sum\limits_{m_{l}\geq0}\prod\limits^{M_{1}}_{l=1}
\frac{\left((-1)^{l-1}Y_{l}a_{-l}\right)^{m_{l}}}{m_{l}!}|\text{vac}\rangle\notag\\
&=&\sum\limits_{k_{l}\geq0}\sum\limits_{k_{p}\geq0}\langle\text{vac}|
\prod\limits^{M_{1}}_{l=1}\frac{\left(X_{l}\right)^{k_{l}}}{k_{l}!}
\prod\limits^{M_{2}}_{p=1}\frac{\left(X^{\prime}_{p}\right)^{k_{p}}}{k_{p}!}
\prod\limits^{M_{2}}_{p=1}\frac{\left(Y^{\prime}_{p}\right)^{k_{p}}}{k_{p}!}
\prod\limits^{M_{1}}_{l=1}\frac{\left(Y_{l}\right)^{k_{l}}}{k_{l}!}|\text{vac}\rangle
\cdot(-1)^{\sum\limits^{M_{1}}_{l=1}k_{l}+\sum\limits^{M_{2}}_{p=1}k_{p}}
\prod\limits^{M_{1}}_{l=1}k_{l}!\prod\limits^{M_{2}}_{p=1}k_{p}!\notag\\
&=&\sum\limits_{k_{l}\geq0}\sum\limits_{k_{p}\geq0}\langle\text{vac}|
\prod\limits^{M_{1}}_{l=1}\frac{\left(-X_{l}\right)^{k_{l}}}{k_{l}!}
\prod\limits^{M_{2}}_{p=1}\frac{\left(-X^{\prime}_{p}\right)^{k_{p}}}{k_{p}!}
\prod\limits^{M_{2}}_{p=1}\left(Y^{\prime}_{p}\right)^{k_{p}}
\prod\limits^{M_{1}}_{l=1}\left(Y_{l}\right)^{k_{l}}|\text{vac}\rangle\notag\\
&=&CT\left(\langle\text{vac}|\prod\limits^{M_{1}}_{i=1}\exp\left(-\frac{X_{i}}{w_{i}}\right)
\prod\limits^{M_{1}}_{j=1}\frac{1}{1-w_{j}Y_{j}}\prod\limits^{M_{2}}_{i^{\prime}=1}
\exp\left(-\frac{X^{\prime}_{i^{\prime}}}{w^{\prime}_{i^{\prime}}}\right)
\prod\limits^{M_{2}}_{j^{\prime}=1}\frac{1}{1-w^{\prime}_{j^{\prime}}Y^{\prime}_{j^{\prime}}}
\right)\notag\\
&=&CT\left(\prod\limits^{M_{1}}_{j=1}\frac{1}{W_{j}}\right)
CT\left(\prod\limits^{M_{2}}_{j^{\prime}=1}\frac{1}{W^{\prime}_{j^{\prime}}}\right)\notag\\
&=&\frac{1}{\det(I+M)\det(I+M^{\prime})}\notag\\
&=&\frac{1}{\sum\limits_{\mu\subseteq[N_{1},M_{1}]\atop{\lambda\subseteq[N_{2},M_{2}]}}
s_{\mu}\{\mathbf{x}\}s_{\lambda}\{\mathbf{y}\}s_{\mu}\{\mathbf{u}\}
s_{\lambda}\{\mathbf{v}\}}\notag\\
&=&\frac{1}{\sum\limits_{\mu\subseteq[N_{1},M_{1}]\atop{\lambda\subseteq[N_{2},M_{2}]}}
\sum\limits_{\tau,\nu,\xi\in\mathcal{P}\atop{\widetilde{\tau},\widetilde{\nu},
\widetilde{\xi}\in\mathcal{P}}}
C^{\mu}_{\nu,\tau}C^{\lambda}_{\xi,\tau}C^{\mu}_{\widetilde{\nu},\widetilde{\tau}}
C^{\lambda}_{\widetilde{\xi},\widetilde{\tau}}
S_{[\nu,\xi]}\{\mathbf{x},\mathbf{y}\}
S_{[\widetilde{\nu},\widetilde{\xi}]}\{\mathbf{u},\mathbf{v}\}}.
\end{eqnarray}
This completes the proof.
\end{proof}

\section{Summary and discussion}
We have constructed the bosonic UC hierarchy within the framework of charged free bosons, and investigated its tau functions and correlation functions. It is shown that the tau functions of the bosonic UC hierarchy can be explicitly expressed by a series of ordered exponential operators acting on the vacuum vector. Moreover, the correlation functions of the bosonic UC hierarchy  are given by a sum of products of UCs. The representation of the Lie algebra $\mathfrak{\widehat{gl}}_{2\infty}$ has also been developed. A notable observation is that the correlation functions of the bosonic UC hierarchy are precisely the inverse of those of the generalized phase model \cite{correclation_GPM}. As is well known, the $q$-boson model is closely related to Hall-Littlewood functions. It is an interesting question how to link the framework of charged free bosons with the q-boson model and Hall-Littlewood functions. It will be considered in the near future.

\section{Acknowledgements}
This article is dedicated to Professor Ke Wu in Capital Normal University in celebration of his 80th birthday. And this work is supported by the National Natural Science Foundation of China (Grant No. 12461048), the Natural Science Foundation of Inner Mongolia Autonomous Region (Grant No. 2023MS01003)  and the Inner Mongolia Autonomous Region Science and Technology Program Project (Grant No. 2025KYPT0098)


\begin{thebibliography}{99}

\bibitem{quantumphysics}R. J. Baxter, Exactly Solved Models in Statistical Mechanics, Academic Press, New York, 1982.


\bibitem{integrable_quantum}R. Donagi, B. Dubrovin, E. Frenkel and E. Previato, Integrable Systems and Quantum Groups, Springer Berlin, Heidelberg, 1993.


\bibitem{introduction_integrablesystem}O. Babelon, D. Bernard and M. Talon, Introduction to Classical Integrable Systems, Cambridge University Press, Cambridge, 2009.

\bibitem{integrable_wave}R. Camassa and D. D. Holm, An integrable shallow water equation with peaked solitons, Phys. Rev. Lett. 71 (1993) 1661-1664.


\bibitem{integrable_QFT}V. V. Bazhanov, S. L. Lukyanov and A. B. Zamolodchikov, Integrable structure of conformal field theory, quantum KdV theory and Thermodynamic Bethe Ansatz, Commun. Math. Phys. 177 (1996) 381-398.

\bibitem{integrable_string}R. Dijkgraaf and C. Vafa, Matrix models, topological strings, and supersymmetric gauge theories, Nucl. Phys. B 644 (2002) 3-20.


\bibitem{Jimbo82}E. Date, M. Jimbo, M. Kashiwara and T. Miwa, Transformation groups for soliton equations-Euclidean Lie algebras and reduction of the KP hierarchy, Publ. Res. Inst. Math. Sci. 18 (1982) 1077-1110.

\bibitem{Jimbo2}M. Jimbo and T. Miwa, Solitons and infinite-dimensional Lie algebras, Publ. Res. Inst. Math. Sci. 19 (1983) 943-1001.


\bibitem{Kac2013}V. G. Kac, A. K. Raina and N. Rozhkovskaya, Bombay Lectures on Highest Height Representations of Iinfinite Dimensional Lie Algebras, World Scientific Publishing, Singapore, 2013.

\bibitem{UC}T. Tsuda, Universal characters and an extension of the KP hierarchy, Commun. Math. Phys. 248 (2004) 501-526.



\bibitem{Wheeler}M. Wheeler, Free fermions in classical and quantum integrable models, Ph.D thesis, Univ. Melbourne, arXiv:1110.6703v1.


\bibitem{GPM2019}N. Wang and C. Z. Li, Universal character, phase model and topological strings on $\mathbb{C}^{3}$, Eur. Phys. J. C 79 (2019) 953.


\bibitem{Jimbo2000}M. Jimbo, T. Miwa and E. Date, Solitons: Differential Equations, Symmetries and Infinite Dimensional Algebras, Cambridge University Press, Cambridge, 2000.

\bibitem{boson_integrable}B. Bakalov and D. Fleisher, Bosonizations of $\mathfrak{\widehat{sl}}_{2}$ and integrable hierarchies, SIGMA 11 (2015) 005.


\bibitem{Walgrbra_boson}W. Q. Wang, $\mathcal{W}_{1+\infty}$ algebra, $\mathcal{W}_{3}$ algebra, and Friedan-Martinec-Shenker bosonization, Commun. Math. Phys. 195 (1998) 95-111.

\bibitem{boson_hierarchy}K. T. Liszewski, The charged free boson integrable hierarchy, Ph.D thesis, North Carolina State University, 2011.

\bibitem{tau_function_boson}N. H. Jing and Z. J. Li, Tau functions of the charged free bosons, Sci. China Math. 63 (2020) 2157-2176.

\bibitem{mBKP_fermoinic}W. C. Guo, M. Y. Chen, Y. Yang and J. P. Cheng, Darboux transformations of the modified BKP hierarchy by fermionic approach, J. Math. Phys. 64 (2023) 103501.

\bibitem{UC_fermion}Y. N. Wang and Z. W. Yan, Solutions of the universal character hierarchy and BUC hierarchy by fermionic approach, J. Math. Anal. Appl. 532 (2024) 127912.

\bibitem{PM_fermion}Z. N. Cui, Y. Bai, N. Wang and K. Wu, The fermion representation of the phase model, Chin. Q. J. Math. 37 (2022) 317.

\bibitem{GPM_fermion}X. Zhang and Z. W. Yan, The fermion representation of the generalized phase model, Nucl. Phys. B 1002 (2024) 116532.

\bibitem{Mac}I. G. Macdonald, Symmetric Functions and Hall Polynomials, Clarendon Press, Oxford, 1995.

\bibitem{Macdonald_integrable}H. Awata, S. Odake and J. Shiraishi, Integral representations of the Macdonald symmetric polynomials, Commun. Math. Phys. 179 (1996) 647-666.

\bibitem{Toda_KP}D. Yang and J. Zhou, From Toda hierarchy to KP hierarchy, SIGMA 21 (2025) 068.

\bibitem{symmetric_integrable}A. Mironov, A. Morozov and A. Popolitov, Symmetric polynomials: DIM integrable systems versus twisted Cherednik systems, Phys. Lett. B 877 (2026) 140457.


\bibitem{universal_characters}K. Koike, On the decomposition of tensor products of the representations of the classical groups: by means of the universal characters, Adv. Math. 74 (1989) 57-86.

\bibitem{correctionphase}N. M. Bogoliubov, A. G. Izergin and N. A. Kitanine, Correlation functions for a strongly correlated boson system, Nucl. Phys. B 516 (1998) 501-528.


\bibitem{model-symmetricfunction}N. V. Tsilevich, Quantum inverse scattering method for the $q$-boson model and symmetric functions, Funct. Anal. Appl. 40 (2006) 207-217.

\bibitem{correclation_GPM}D. H. Li, S. Y. Zhang and Z. W. Yan, Correlation functions of the generalized phase model, Phys. Lett. B 878 (2026) 140526.


\bibitem{bosoncorreclation}N. H. Jing, Z. J. Li and T. W. Cai, Correlation functions of charged free boson and fermion systems, J. Stat. Mech. 2020 (2020) 083101.


\bibitem{MKP_Boson}Y. Y. Zhang, J. P. Cheng, S. F. Shen and J. Hu, Modified bosonic integrable hierarchy, J. Geom. Phys. 201 (2024) 105199.


\bibitem{BCH_formula}R. C. Thompson, Cyclic relations and the Goldberg coefficients in the Campbell-Baker-Hausdorff formula, Proc. Amer. Math. Soc. 86 (1982) 12-14.


\end{thebibliography}
\end{document}